\documentclass[sigconf]{acmart}

\usepackage{subcaption}
\usepackage{multirow}
\usepackage{xcolor}
\usepackage{multicol}
\usepackage{pdfcomment}
\usepackage[capitalize]{cleveref}
\usepackage[disable]{todonotes}
\setuptodonotes{size=\tiny, inline}

\usepackage{algorithm}
\usepackage[noend]{algpseudocode}
\algrenewcommand\algorithmicindent{1.0em}

\algnewcommand{\IfThen}[2]{
  \State \algorithmicif\ #1\ \algorithmicthen\ #2}
\algnewcommand{\IfThenElse}[3]{
  \State \algorithmicif\ #1\ \algorithmicthen\ #2\ \algorithmicelse\ #3}

\newcommand{\bhline}[1]{\noalign{\hrule height #1}}

\newcommand{\lcolon}{{:\,}}

\newcommand{\revisiontag}[3]{}
\newcommand{\revisiontagx}[3]{}
\newcommand{\revisiontaghere}[3]{}
\newcommand{\revisiontext}[1]{#1}
\newcommand{\revision}[4]{\revisiontext{\revisiontag{#1}{#2}{#3}#4}}
\newcommand{\revisionx}[4]{\revisiontext{\revisiontagx{#1}{#2}{#3}#4}}

\defineavatar{R1}{color=red}
\defineavatar{R2}{color=yellow}
\defineavatar{R3}{color=green}

\setcopyright{acmcopyright}
\copyrightyear{2018}
\acmYear{2018}
\acmDOI{XXXXXXX.XXXXXXX}

\acmConference[Conference acronym 'XX]{Make sure to enter the correct
  conference title from your rights confirmation email}{June 03--05,
  2018}{Woodstock, NY}
\acmPrice{15.00}
\acmISBN{978-1-4503-XXXX-X/18/06}

\begin{document}

\title{GuP: Fast Subgraph Matching by Guard-based Pruning}

\author{Junya Arai}
\email{junya.arai@ntt.com}
\orcid{0000-0002-5681-5941}
\affiliation{%
  \institution{Nippon Telegraph and Telephone Corporation}
  \streetaddress{3-9-11 Midoricho}
  \city{Musashino-shi}
  \state{Tokyo}
  \country{Japan}
  \postcode{180-8585}
}

\author{Yasuhiro Fujiwara}
\email{yasuhiro.fujiwara@ntt.com}
\affiliation{%
  \institution{Nippon Telegraph and Telephone Corporation}
  \streetaddress{3-1 Morinosato Wakamiya}
  \city{Atsugi-shi}
  \state{Kanagawa}
  \country{Japan}
  \postcode{243-0198}
}

\author{Makoto Onizuka}
\email{onizuka@ist.osaka-u.ac.jp}
\affiliation{%
  \institution{Osaka University}
  \streetaddress{1-5 Yamadaoka}
  \city{Suita-shi}
  \state{Osaka}
  \country{Japan}
  \postcode{565-0871}
}

\begin{abstract}
Subgraph matching, which finds subgraphs isomorphic to a query, is the key to information retrieval from data represented as a graph.
To avoid redundant exploration in the data, existing methods restrict the search space by extracting candidate vertices and candidate edges that may constitute isomorphic subgraphs.
However, it still requires expensive computation because candidate vertices induce many subgraphs that are not isomorphic to the query.
In this paper, we propose GuP, a subgraph matching algorithm with pruning based on guards.
Guards are a pattern of intermediate search states that never find isomorphic subgraphs.
GuP attaches a guard on each candidate vertex and edge and filters out them adaptively to the search state.
The experimental results showed that GuP can efficiently solve various queries, including those that the state-of-the-art methods could not solve in practical time.
\end{abstract}

\begin{CCSXML}
<ccs2012>
  <concept>
    <concept_id>10002951.10003317.10003325</concept_id>
    <concept_desc>Information systems~Information retrieval query processing</concept_desc>
    <concept_significance>300</concept_significance>
  </concept>
  <concept>
    <concept_id>10002951.10002952.10002953.10010146</concept_id>
    <concept_desc>Information systems~Graph-based database models</concept_desc>
    <concept_significance>300</concept_significance>
  </concept>
</ccs2012>
\end{CCSXML}

\ccsdesc[300]{Information systems~Information retrieval query processing}
\ccsdesc[300]{Information systems~Graph-based database models}

\keywords{subgraph isomorphism, graph query, graph algorithms}

\settopmatter{printfolios=true}

\maketitle

\section{Introduction}
\label{sec:introduction}

Similar to searching for a specific phrase within a document, searching for a specific structure within a graph is one of the most fundamental operations in graph databases.
This operation is formally defined as \emph{subgraph matching}.
  It enumerates all full embeddings, which map every vertex in a \emph{query graph} to the vertex of an isomorphic subgraph in a \emph{data graph}.
  We refer to a vertex in the query graph and the data graph as a query vertex and a data vertex, respectively.
\cref{fig:overall-example} shows an example of query graph $Q$ and data graph $G$.
Letting $(u_i, v_j)$ denote an assignment of query vertex $u_i$ to data vertex $v_j$, there exists full embedding $M = \{ (u_0, v_1), (u_1, v_4), (u_2, v_7), (u_3, v_{10}), (u_4, v_0) \}$.
Since subgraph matching is NP-hard \cite{Han2013a} and computationally expensive for complex graphs, efficient methods have been studied for a long time \cite{Ullmann1976, Cordella2004, Han2013a, Bi2016, Han2019, Sun2020tkde, Sun2020sigmod, Sun2021, Kim2022}.

Mainstream methods for subgraph matching perform a \emph{backtracking} search,
  which is a recursive procedure that extends \emph{partial embedding} $M$ with a new assignment of a query vertex and recurses with extended $M$ until $M$ becomes a full embedding.
  The extension is performed so that $M$ preserves isomorphism.
  If such an extension is impossible, the procedure returns to the caller, and the caller tries extending $M$ with another assignment.
A partial embedding is called a \emph{deadend} if it is not extendable or fails to yield any full embeddings in the subsequent recursions \cite{Rossi2006}.
Since it is futile to perform recursions with deadend partial embeddings, reducing them is the key to improving the search performance.

To reduce futile recursions, most methods employ candidate filtering \cite{Bi2016}.
  For each query vertex $u_i$, candidate filtering collects data vertices that can be a destination of $u_i$ into $C(u_i)$, a set of the candidate vertices of $u_i$.
    One of the most primitive filters is based on labels;
      it makes $C(u_i)$ of the data vertices with the same label as $u_i$.
    To further remove unnecessary vertices, modern methods perform matching with a tree or a directed acyclic graph (DAG) obtained from a query graph \cite{Han2013a, Bi2016, Bhattarai2019, Han2019, Kim2022}.
      They manage the candidate vertices and the edges between them, which we call candidate edges, in an auxiliary data structure such as a \emph{candidate space} \cite{Han2019}.
These approaches achieve a significant speedup of the search.

\begin{figure}[t]
  \begin{subfigure}[b]{0.30\linewidth}
    \centering
    \includegraphics[width=0.39\linewidth]{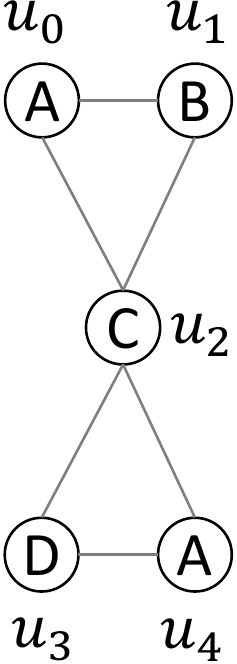}
    \vspace{0.5em}
    \caption{Query graph $Q$}
    \label{fig:overall-example-query}
  \end{subfigure}
  \begin{subfigure}[b]{0.69\linewidth}
    \centering
    \includegraphics[width=0.8\linewidth]{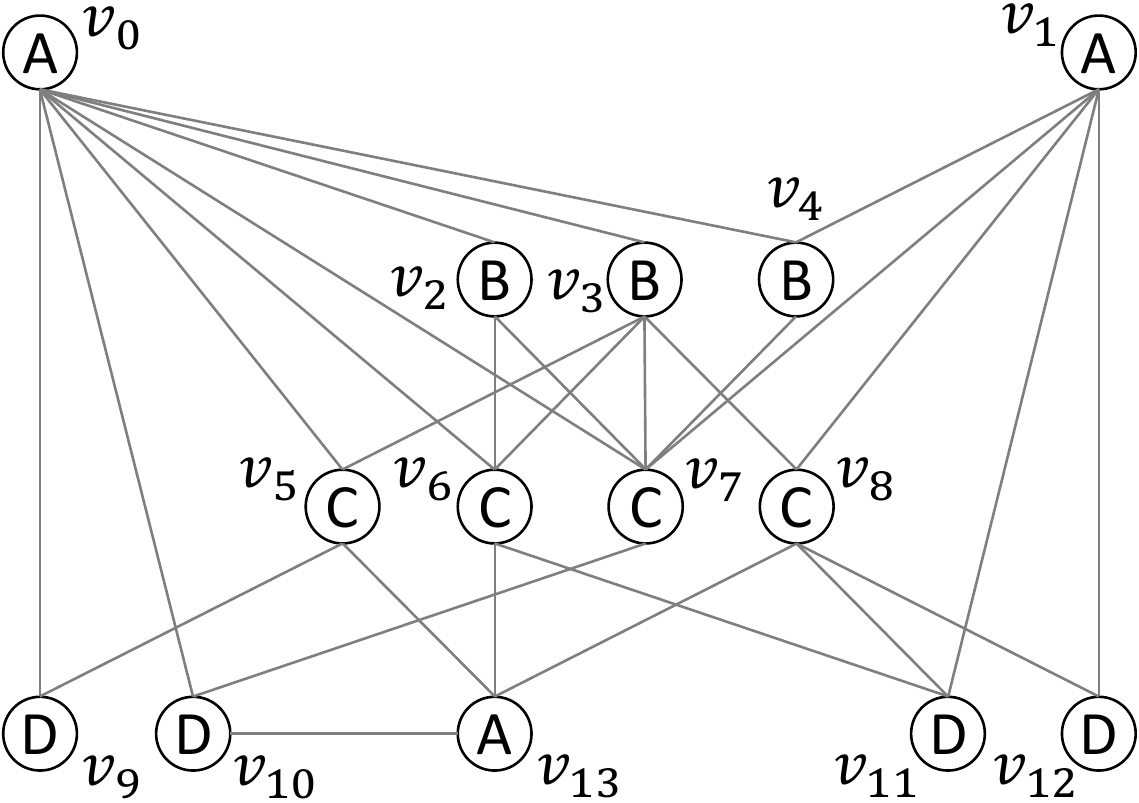}
    \caption{Data graph $G$}
    \label{fig:overall-example-graph}
  \end{subfigure}
  \vspace{-5mm}
  \caption{Example of a query graph and a data graph}
  \label{fig:overall-example}
\end{figure}

However, even if candidate filtering is applied, backtracking still suffers from numerous futile recursions.
This is because candidate filtering hardly captures a conflict between assignments, namely, a constraint violation caused by a combination of multiple assignments.
  Subgraph isomorphism requires that adjacent query vertices $u_i$ and $u_j$ are assigned to adjacent data vertices, constraining the combination of assignments of $u_i$ and $u_j$.
    This implies that a cycle in a query graph must be mapped to a cycle in a data graph.
    Cycles are usually difficult to find because of the sparseness of real-world graphs \cite{Chung2010}, and so partial embeddings tend to become deadends.
  Moreover, embeddings must be injective; namely, each query vertex must be assigned to a different data vertex.
    This globally constrains the combination of assignments in a partial embedding.
  These constraints are not well captured in the extraction of $C(u_i)$ because it is based on the constraints on only $u_i$ without assumptions on the assignments of the other query vertices.
Thus, candidate filtering fails to eliminate deadends due to conflicting assignments.

\paragraph{Our approach}
We propose \emph{GuP}, an efficient algorithm for subgraph matching.
In contrast to candidate filtering that captures constraints on a single vertex, GuP utilizes a \emph{guard} to capture constraints on a partial embedding.
  A guard is attached to each candidate vertex and candidate edge.
  If a partial embedding matches the attached guard, GuP adaptively filters out that vertex or edge.
  This enables early pruning of deadend partial embeddings before detecting a violation of the constraints.
In detail, GuP combines two kinds of guards: a \emph{reservation guard} and a \emph{nogood guard}.

\revision{R3}{O1}{novelty}{
The reservation guards propagate the injectivity constraint for checking it in earlier backtracking steps.
  Let us denote candidate vertex $v$ of query vertex $u_i$ by $(u_i, v)$, the same notation as an assignment.
  Intuitively, a reservation guard on $(u_i, v)$ is a set of the data vertices to be used in future extensions of partial embeddings with assignment $(u_i, v)$.
  The data vertices in the reservation guard must be kept unassigned in a partial embedding before extending it with $(u_i, v)$; otherwise, it violates the injectivity constraint in the subsequent extensions.
  Hence, we can filter out $v$ from $C(u_i)$ in such cases.

On the other hand, the nogood guards detect deadends by learning conflicting assignments from the deadends encountered before.
  A nogood guard is attached to both candidate vertex and candidate edge to exploit edges in a query graph for pruning.
  Conceptually, a nogood guard is a set of assignments that conflict with the extension using that candidate vertex or edge.
  Thus, we can filter out it if a partial embedding includes the assignments in the nogood guard.
  GuP updates nogood guards on-the-fly during backtracking by discovering a \emph{nogood} \cite{Stallman1977}, a set of conflicting assignments.
    Although a nogood is a widely-known concept for the constraint satisfaction problem, only several studies \cite{McCreesh2015, Mccreesh2020} applied it to subgraph matching.
    In this study, we introduce novel nogood discovery rules and a \emph{search-node encoding} for effective and efficient pruning.
    Our nogood discovery rules offer a general nogood, which can be found in many partial embeddings and hence offer high pruning power.
    GuP also has a special rule for a nogood guard on edges, while existing rules cannot produce a nogood that fits edge-based pruning.
    Furthermore, a search-node encoding provides a compact representation for a nogood guard and enables pruning without increasing the time and space complexities.

GuP stores guards in a \emph{guarded candidate space} (GCS), an auxiliary data structure with the guards.
The experimental results confirmed that guards significantly reduce futile recursions, and as a result, GuP can process query graphs that cannot be processed by the state-of-the-art methods even after spending an hour.
Our contributions introduced in this paper are summarized as follows:

\begin{enumerate}
  \item Pruning approach based on guards,
  \item Reservation, a pruning condition based on injectivity,
  \item Nogood discovery rules to obtain general nogoods, and
  \item Search-node encoding of a nogood guard.
\end{enumerate}
}

\paragraph{Paper organization}
\cref{sec:background} presents the background, and \cref{sec:method} details our approach.
\cref{sec:evaluation} discusses the experimental results, and we conclude this paper in \cref{sec:conclusion}.

\section{Background}
\label{sec:background}

In this section, we review related work and introduce the problem definition and notations used in this paper.

\subsection{Related Work}
\label{sec:related-work}

There are various problem settings and algorithms related to the search of subgraphs, such as subgraph enumeration algorithms for unlabeled graphs \cite{Lai2015, Lai2016, Kim2016} and RDF query engines \cite{Huang2011, Yuan2013, Kim2015} for edge-labeled graphs.
On vertex-labeled graphs, subgraph containment algorithms \cite{Cheng2007, Bonnici2010, Giugno2013} take a set of data graphs and find ones with at least one embedding of a query graph, and subgraph matching algorithms find all embeddings in a single data graph.
Approaches based on join operations \cite{Aberger2017, Mhedhbi2019, Sun2021} are mainly used for subgraph homomorphism-based subgraph matching, which allows duplicate assignments of query vertices to the same data vertex.
In contrast, subgraph isomorphism-based subgraph matching, the focus of this paper, prohibits it.
Since most algorithms for this problem setting perform a backtracking search \cite{Ullmann1976}, we review three popular approaches to improve the efficiency of backtracking.

\emph{Candidate filtering.}
Conventional filtering methods are based on local features.
Ullmann \cite{Ullmann1976} employed label-and-degree filtering (LDF), which collects data vertex $v$ as a candidate vertex of $u_i$ if $v$ has the same label as $u_i$ and $v$ has a degree greater than or equal to that of $u_i$.
Neighborhood label frequency filtering (NLF) \cite{Bi2016} checks for every label $l$ if a candidate vertex of $u_i$ has label-$l$ neighbors not fewer than those of $u_i$.
For example, $v_{13}$ in \cref{fig:overall-example} is removed from $C(u_0)$ because $v_{13}$ has no label-$B$ neighbor although $u_0$ has one label-$B$ neighbor, $u_1$.
Recent methods employ LDF and NLF in common, but they also perform pseudo-matching on nearby vertices of a candidate vertex and a query vertex \cite{He2008, Zhao2010, Sun2020tkde} or matching with a spanning tree or a DAG built from a query graph \cite{Han2013a, Bi2016, Bhattarai2019, Han2019, Kim2022}.
All the previous approaches extract a candidate-vertex set before backtracking and do not change it after that.
In contrast, GuP adaptively changes it depending on a partial embedding by using guards.

\emph{Optimization of matching order.}
The size of the search space varies depending on matching order, in which the destination of query vertices is determined.
This is because the destination of query vertex $u_i$ must be chosen from data vertices adjacent to the destinations of all the matched neighbors of $u_i$.
Many efforts have been made to generate a good matching order that  first decides the destinations of query vertices with fewer candidate vertices and keeps the search space of the remaining query vertices small \cite{Shang2008, He2008, Bi2016, Han2013a, Han2019, Sun2020tkde}.
However, we still do not have a method that can generate a good order for arbitrary query graph and data graph \cite{Katsarou2017,Sun2020sigmod}.
Thus, it is important to reduce the number of candidate vertices and edges.

\emph{Use of nogoods.}
A nogood was introduced by Stallman and Sussman in 1977 \cite{Stallman1977} and has been well studied in the AI community.
Pruning with nogoods is performed in two ways: \emph{backjumping} \cite{Stallman1977, Gaschnig1978, Prosser1993} and \emph{nogood recording} \cite{Stallman1977, Ginsberg1993, Jussien2000}.
  Backjumping abandons deadends by escaping from ongoing recursions until the assignment shared with the last discovered nogood is changed.
  On the other hand, nogood recording stores discovered nogoods in a database and prunes partial solutions including a recorded nogood.
\revisiontext{
\revisiontagx{R1}{D1}{relatedwork}
Backjumping was also independently proposed in the database community.
\revisiontagx{R3}{O1}{failingset-relatedwork}
DAF \cite{Han2019} performs failing set-based pruning \cite{Han2019}, and VEQ \cite{Kim2022} captures equivalences of vertices in backjumping, like symmetricity-based nogood discovery \cite{Freuder1995}.
GuP also performs backjumping, but unlike DAF and VEQ, pruning with nogood guards is categorized as a nogood recording method.
The combination of backjumping and nogood recording enables GuP to eliminate more search space.
}

\subsection{Definitions}
\label{sec:definitions}

\begin{table}[!t]
  \setlength\tabcolsep{1mm}
  \footnotesize
  \centering
  \caption{Notations}
  \vspace{-3mm}
  \label{tb:notations}
  \begin{tabular}{cl}
    \bhline{1pt}
    \multicolumn{1}{c}{Symbol} & \multicolumn{1}{c}{Definition} \\
    \hline
    $u_i, v$           & Query vertex and data vertex \\
    $N(v)$             & Neighbor set of a vertex \\
    $N_-(u_i)$         & Backward neighbor set: $\{ u_j \in N(u_i) \mid j < i \}$ \\
    $N_+(u_i)$         & Forward neighbor set: $\{ u_j \in N(u_i) \mid j > i \}$ \\
    $C(u_i)$           & Candidate-vertex set of $u_i$ \\
    $M(u_i)$           & Destination of $u_i$ under $M$ \\
    $M \oplus v$       & Extension of $M$ with $v$: $M \cup \{(u_{|M|}, v)\}$ \\
    $M[K]$             & Restriction of $M$ to $K$: $\{ (u_i, v) \in M \mid u_i \in K \}$ \\
    $M[\lcolon i]$     & Restriction of $M$ by ID filtering: $\{ (u_j, v) \in M \mid j < i \}$ \\
    $V_Q[\lcolon i]$   & Set of query vertices by ID filtering: $\{ u_j \in V_Q \mid j < i \}$ \\
    $\mathrm{dom}(M)$  & Domain of a mapping: $\{ u_i \mid \exists v, (u_i, v) \in M \}$ \\
    $\mathrm{Im}(M)$   & Image of a mapping: $\{ v \mid \exists u_i, (u_i, v) \in M \}$ \\
    $R$                & Reservation guard on a candidate vertex \\
    $\mathit{NV}$, $\mathit{NE}$ & Nogood guards on a candidate vertex and a candidate edge \\
    \hline
  \end{tabular}
\end{table}

\cref{tb:notations} lists the symbols used in this paper.
We focus on vertex-labeled simple undirected graphs, similarly to previous studies \cite{Ullmann1976,Cordella2004,Han2013a,Bi2016,Han2019,Sun2020tkde,Sun2020sigmod,Sun2021, Kim2022}.
Note that our method can easily adapt to other kinds of graphs, such as directed graphs and edge-labeled graphs.
Consider query graph $Q = (V_Q, E_Q, \Sigma, \ell)$ and data graph $G = (V_G, E_G, \Sigma, \ell)$.
$V_Q$ and $V_G$ are sets of vertices, $E_Q$ and $E_G$ are sets of edges, $\Sigma$ is a set of labels, and $\ell$ is a mapping of a vertex to its label.
We assume that the query vertices have consecutive ID numbers, i.e., $V_Q = \{ u_0, u_1, u_2, \ldots \}$.
If there is no ambiguity, we use $u_i$ as a query vertex and $v$ as a data vertex without explicit mention.
$N(u_i)$ and $N(v)$ denote the sets of neighbors of $u_i$ and $v$, respectively.
Let the set of forward neighbors $N_+(u_i) = \{ u_j \mid j > i \}$ and the set of backward neighbors $N_-(u_i) = \{ u_j \mid j < i \}$.
For arbitrary domain $X \subseteq V_Q$, mapping $M: X \to V_G$ is denoted by binary relation $M \subseteq X \times V_G$.
$\mathrm{dom}(M)$ and $\mathrm{Im}(M)$ denote the domain and the image of $M$, respectively.
Notation $M(u_i)$ implicitly implies assumption $u_i \in \mathrm{dom}(M)$.
Given query graph $Q$ and data graph $G$, an embedding is defined as follows.

\begin{definition}[Embedding]
  \label{def:embedding}
  Mapping $M: V_Q \to V_G$ is an \emph{embedding} of $Q$ into $G$ if and only if $M$ satisfies the following constraints:
  \begin{enumerate}
    \item \emph{Label constraint}: $\forall u_i \in V_Q,\, \ell(u_i) = \ell(M(u_i)),$
    \item \emph{Adjacency constraint}: $\forall (u_i, u_j) \in E_Q,\, (M(u_i), M(u_j)) \in E_G,$
    \item \emph{Injectivity constraint}: $\forall u_i, u_j \in V_Q,\, i \neq j \Rightarrow M(u_i) \neq M(u_j).$
  \end{enumerate}
\end{definition}

Then, the problem definition of this paper is given as follows.

\begin{definition}[Subgraph matching]
  Given query graph $Q$ and data graph $G$, \emph{subgraph matching} is a problem to enumerate all the embeddings of $Q$ in $G$.
\end{definition}

The three constraints in \cref{def:embedding} are referred to as the \emph{constraints of isomorphism}.
An embedding is also called a \emph{full embedding} to emphasize the contrast to a \emph{partial embedding}, which is an embedding of an induced subgraph of $Q$.
The \emph{length} of partial embedding $M$ is the number of assignments in $M$, denoted by $|M|$.
An \emph{extension} is a mapping made by extending a partial embedding with an additional assignment.
In contrast to partial embeddings, extensions may violate the constraints of isomorphism.

In the following discussion, we assume that the matching order is ascending order of query vertex IDs.
This preserves the generality because renumbering vertex IDs can change the matching order.
Additionally, we assume that vertex IDs are numbered in a connected order \cite{Sun2020tkde}, that is, every query vertex except $u_0$ has a neighbor with a smaller vertex ID.
Note that matching orders used in subgraph matching usually satisfy this property \cite{Lee2012a, Sun2020sigmod}.
Under our assumptions, partial embeddings and extensions of length $k$ always consist of assignments of $u_0, u_1, \ldots, u_{k - 1}$.
We denote an extension of partial embedding $M$ with an assignment to $v$ by $M \oplus v$ i.e., $M \oplus v = M \cup \{ (u_k, v) \}$ where $k$ is the length of $M$.

\section{Method}
\label{sec:method}

We propose GuP, an efficient algorithm for subgraph matching.
GuP prunes deadend partial embeddings by filtering out unnecessary candidate vertices and edges adaptively to the assignments in each partial embedding.
The key idea is guards attached to each candidate vertex and edge, which represent a filtering condition.
In the following sections, we first present the overview of GuP in \cref{sec:overview}.
Then, \cref{sec:reservation-guard,sec:nogood-guard} details a reservation guard and a nogood guard, respectively, and the backtracking algorithm using guards is described in \cref{sec:backtracking}.
In addition, we introduce a search-node encoding of nogood guards and briefly discuss an approach for parallelization in \cref{sec:optimizations}.
Finally, we analyze the complexity of GuP in \cref{sec:complexity-analysis}.
All the examples in this section consider query graph $Q$ and data graph $G$ shown in \cref{fig:overall-example}.

\subsection{Overview}
\label{sec:overview}

GuP consists of the following steps.
\begin{enumerate}
  \item \emph{Guarded candidate space (GCS) construction:}
    GuP builds a GCS, an auxiliary data structure that organizes candidate vertices, candidate edges, and guards.
    This step includes candidate filtering and matching order optimization.
  \item \emph{Reservation guard generation:}
    GuP populates the reservation guards in the GCS by analyzing the connections between candidate vertices.
  \item \emph{Backtracking search:}
    GuP enumerates full embeddings using the data in the GCS.
    Nogood guards are generated on-the-fly and used for pruning along with reservation guards.
\end{enumerate}
In the GCS generation step, GuP employs extended DAG-graph DP \cite{Kim2022} for candidate filtering and VC \cite{Sun2020tkde} for optimizing the matching order.
  In the following, we assume that the query vertices are numbered in the optimized order.
  Note that an approach for candidate filtering and matching order optimization is out of the scope of this work, and guard-based pruning can be used in combination with arbitrary existing approaches.
\cref{fig:overall-example-gcs} illustrates the GCS for $Q$ and $G$ shown in \cref{fig:overall-example}.
  $R$ and $NV$ shows a reservation guard and a nogood guard attached to each candidate vertices.
  Nogood guards are also attached to candidate edges while they are omitted for conciseness.
  All the candidate-vertex sets are simply a set of data vertices with the same label, except that $v_{13}$ is removed from $C(u_0)$ by NLF as described in \cref{sec:related-work}.
Similar to a candidate space, a GCS provides all the information necessary for backtracking \cite{Han2019}.
After the GCS construction, GuP generates a reservation guard and then starts backtracking.

\begin{figure}[t]
  \centering
  \includegraphics[width=1.0\linewidth]{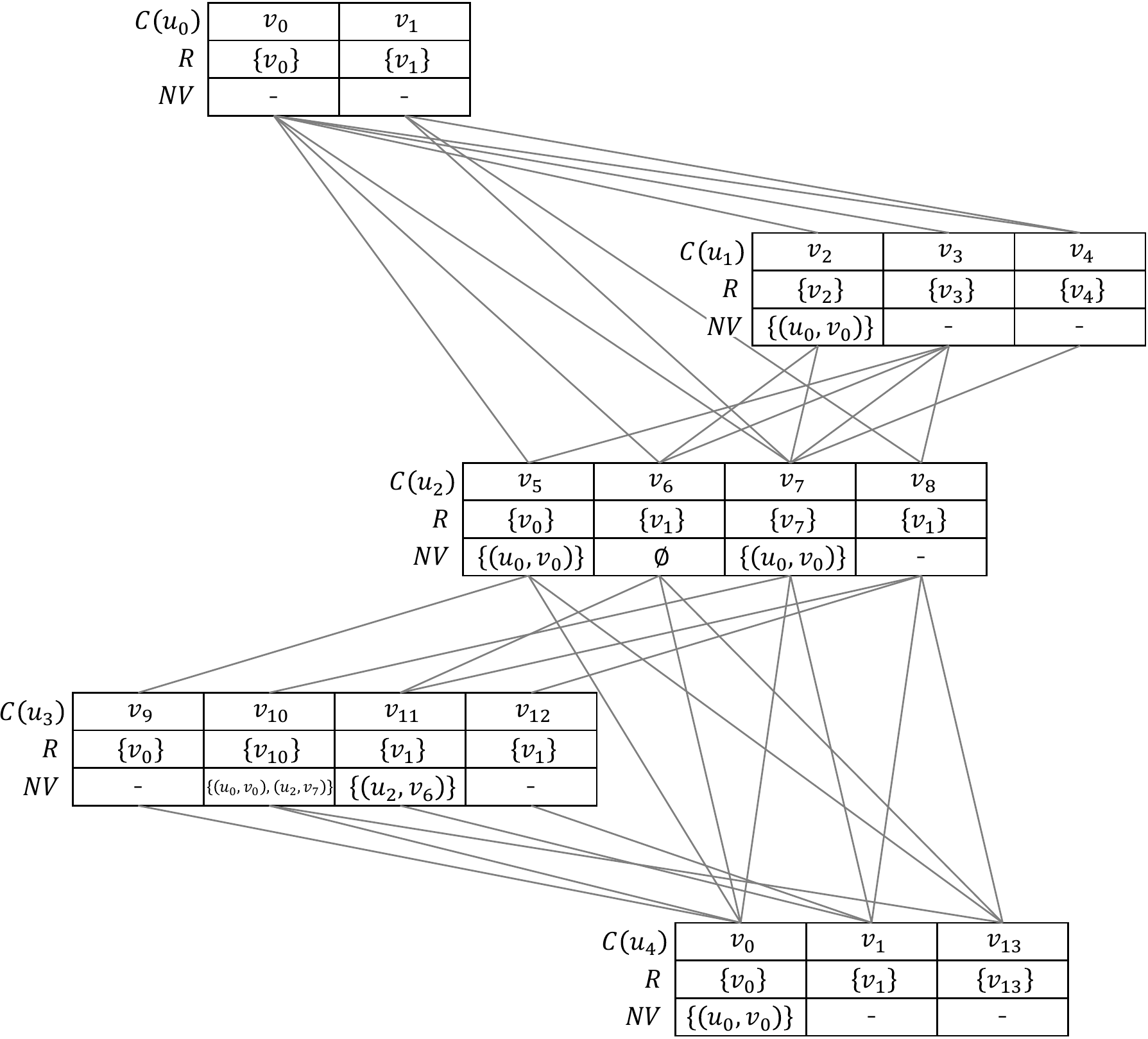}
  \vspace{-3mm}
  \caption{
    \revisiontext{Guarded candidate space for $Q$ and $G$ in \cref{fig:overall-example}.}
  }
  \label{fig:overall-example-gcs}
\end{figure}

\begin{figure}[t]
  \centering
  \includegraphics[width=0.9\linewidth]{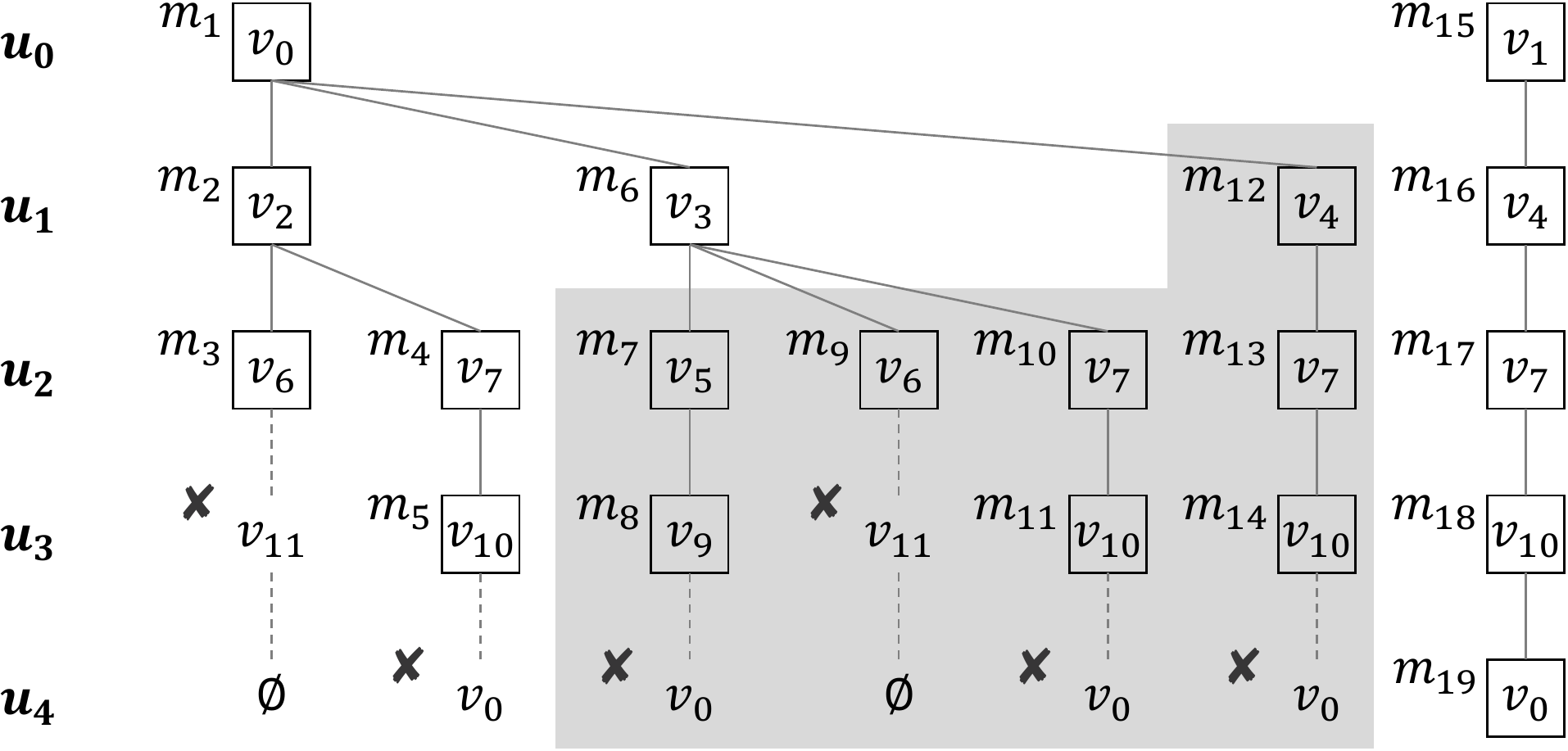}
  \vspace{-2mm}
  \caption{
    Search tree for $Q$ and $G$ in \cref{fig:overall-example}.
  }
  \label{fig:overall-example-search-tree}
\end{figure}

\subsection{Reservation guard}
\label{sec:reservation-guard}

This section first introduces an abstract concept of a reservation and then presents an algorithm to generate a reservation guard.

\subsubsection{Reservation}

We begin with the fundamental definitions.
Let $Q[I]$ be a subgraph of $Q$ induced by $I \subseteq V_Q$,

\begin{definition}[Inclusive descendant]
  \label{def:inclusive-descendant}
  Let $u_i, u_j \in V_Q$.
  $u_j$ is an \emph{inclusive descendant} of $u_i$ if and only if $u_i = u_j$ or $u_j$ is an inclusive descendant of some $u_k \in N_+(u_i)$.
\end{definition}

\begin{definition}[Rooted subembedding]
  \label{def:subembedding}
  Let $(u_i, v)$ be a candidate vertex, $I$ be the set of all the inclusive descendants of $u_i$, and $M$ be a set of assignments of $u_j \in I$.
  $M$ is a \emph{subembedding} rooted at $(u_i, v)$ if and only if
  (i) $M$ is an embedding of $Q[I]$,
  (ii) $M$ includes assignment $(u_i, v)$, and 
  \revision{R1}{D3}{}{
  (iii) $M(u_j) \in C(u_j)$ holds for arbitrary $u_j$.
  }
\end{definition}

\revisiontext{
Condition (iii) makes sense because $Q[I]$ may be able to be mapped to data vertices that are not candidate vertices for $Q$.
}

\begin{definition}[Reservation]
  \label{def:reservation}
  Let $(u_i, v)$ be a candidate vertex and $S \subseteq V_G$.
  $S$ is a \emph{reservation} of $(u_i, v)$ if and only if an arbitrary subembedding rooted at $(u_i, v)$ contains an assignment to a data vertex in $S$.
\end{definition}

In this definition, any superset of a reservation is also a reservation.
Additionally, \revision{R1}{D4}{}{an arbitrary set of data vertices}, including an empty set, is a reservation of candidate vertex $(u_i, v)$ if there does not exist any subembedding rooted at $(u_i, v)$.
For this reason, at least one reservation can be defined for every candidate vertex.

\begin{example}
  \label{ex:reservation}
  The inclusive descendants of $u_1$ are $u_1$, $u_2$, $u_3$, and $u_4$.
  As shown in the GCS (\cref{fig:overall-example-gcs}), the subembeddings rooted at $(u_1, v_3)$ are
    $\{ (u_1, v_3),\allowbreak (u_2, v_5),\allowbreak (u_3, v_9),\allowbreak (u_4, v_0) \}$,
    $\{ (u_1, v_3),\allowbreak (u_2, v_7),\allowbreak (u_3, v_{10}),\allowbreak (u_4, v_0) \}$,
    $\{ (u_1, v_3),\allowbreak (u_2, v_8),\allowbreak (u_3, v_{11}),\allowbreak (u_4, v_1) \}$, and
    $\{ (u_1, v_3),\allowbreak (u_2, v_8),\allowbreak (u_3, v_{12}),\allowbreak (u_4, v_1) \}$.
  All of them contain an assignment to $v_0$ or $v_1$, and thus $\{v_0, v_1\}$ is one of the reservations of $(u_1, v_3)$.
\end{example}

GuP selects one of the possible reservations of $(u_i, v)$ as reservation guard $R(u_i, v)$.
The definition of a reservation implies that, if a partial embedding contains assignments to $R(u_i, v)$, we cannot assign inclusive descendants of $u_i$ satisfying the injectivity constraint.
We formally state it as follows.
Let $M[\lcolon i]$ be $\{(u_j, v) \in M \mid j < i\}$.

\begin{definition}[Matching with a reservation guard]
We say that partial embedding $M$ \emph{matches} $R(u_i, v)$ if and only if $R(u_i, v) \subseteq \mathrm{Im}(M[\lcolon i])$ holds.
\end{definition}

\begin{lemma}
  \label{lem:reservation}
  Let $M$ be a partial embedding.
  Suppose that $R(u_i, v)$ is a reservation of $(u_i, v)$ and $M$ matches $R(u_i, v)$.
  Then, $M[\lcolon i] \cup \{(u_i, v)\}$ is a deadend.
\end{lemma}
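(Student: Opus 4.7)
The plan is to argue by contradiction: assume $M^{\star}$ is a full embedding extending $M[\lcolon i] \cup \{(u_i, v)\}$, then extract from $M^{\star}$ a subembedding rooted at $(u_i, v)$ and use the defining property of a reservation to expose a violation of injectivity.

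First I would let $I$ be the set of inclusive descendants of $u_i$ and define $M' = M^{\star}[I]$. The key preliminary step is to verify that $M'$ satisfies the three conditions of \cref{def:subembedding}: (i) the embedding property for $Q[I]$ is inherited from $M^{\star}$ being a full embedding, since restricting an embedding to any induced subgraph yields an embedding of that subgraph; (ii) membership of $(u_i, v)$ is immediate because $u_i \in I$ and $M^{\star}(u_i) = v$ by assumption; (iii) $M^{\star}(u_j) \in C(u_j)$ for every $u_j$, which is exactly the correctness guarantee of candidate filtering (every destination appearing in some full embedding is collected as a candidate vertex). Hence $M'$ is a subembedding rooted at $(u_i, v)$.

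Next, I would apply \cref{def:reservation}: since $R(u_i, v)$ is a reservation, $M'$ contains an assignment $(u_j, w)$ with $w \in R(u_i, v)$ and $u_j \in I$. By a short induction on \cref{def:inclusive-descendant} (inclusive descendants are reached through forward neighbors, whose IDs strictly increase), we get $j \geq i$. On the other hand, the hypothesis that $M$ matches $R(u_i, v)$ gives $R(u_i, v) \subseteq \mathrm{Im}(M[\lcolon i])$, so there exists $u_l$ with $l < i$ such that $(u_l, w) \in M[\lcolon i]$. Because $M^{\star}$ extends $M[\lcolon i] \cup \{(u_i, v)\}$, we have $M^{\star}(u_l) = w = M^{\star}(u_j)$ with $l < i \leq j$, contradicting the injectivity constraint for $M^{\star}$.

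The main obstacle I anticipate is the justification of condition (iii) for $M'$. Nothing in the paper excerpt explicitly states that every vertex appearing in a full embedding is retained by candidate filtering; this is a tacit correctness assumption of the filtering pipeline used to build the GCS (extended DAG-graph DP and NLF). I would therefore either invoke it as a standing assumption on candidate filtering or strengthen the argument by remarking that, in the proof, we only truly need $M'$ to be an embedding of $Q[I]$ containing $(u_i, v)$ whose image collides with $\mathrm{Im}(M[\lcolon i])$; so if the reservation were (equivalently) defined over all such embeddings rather than just subembeddings, condition (iii) would not even be required. Everything else is routine bookkeeping.
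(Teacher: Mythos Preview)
Your proof is correct and follows essentially the same route as the paper's: contradiction via a full embedding, restriction to the inclusive-descendant set $I$, application of the reservation definition to locate $u_j\in I$ with $j\ge i$ whose image lies in $R(u_i,v)$, and collision with some $u_l$ ($l<i$) from the matching hypothesis. Your discussion of condition~(iii) is in fact more careful than the paper, which simply asserts that $\hat{M}[I]$ is a $(u_i,v)$-rooted subembedding without justifying that full-embedding destinations lie in the candidate sets; this is indeed a tacit correctness assumption on candidate filtering.
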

\begin{proof}
  We make a proof by contradiction.
  Suppose that there exists full embedding $\hat{M}$ such that $R(u_i, v) \subseteq \mathrm{Im}(\hat{M}[\lcolon i])$ and $(u_i, v) \in \hat{M}$.
  Letting $I$ be the set of all the inclusive descendants of $u_i$, $\hat{M}[I]$ is a $(u_i, v)$-rooted subembedding.
  Hence, by \cref{def:reservation}, there exists $u_j \in I$ such that $\hat{M}(u_j) \in R(u_i, v)$.
  In addition, $j \ge i$ holds because $u_j$ is an inclusive descendant of $u_i$.
  At the same time, from $R(u_i, v) \subseteq \mathrm{Im}(\hat{M}[\lcolon i])$, there exists $u_k$ such that $k < i$ and $\hat{M}(u_k) = \hat{M}(u_j) \in R(u_i, v)$.
  Since $k \neq j$ holds by $k < i \le j$, $\hat{M}$ violates the injectivity constraint, which is a contradiction.
\end{proof}

Letting $i = |M|$ and $v \in C(u_i)$, $M \oplus v$ is a deadend if $M$ matches $R(u_i, v)$.
We can filter out $v$ from $C(u_i)$ by using this property.

\subsubsection{Reservation Guard Generation}

For effective pruning, we prefer that $\mathit{R}(u_i, v)$ is a reservation expected to be matched by many partial embeddings.
To generate such a reservation guard, GuP employs two policies.
First, GuP avoids generating a reservation guard that cannot be matched by any possible partial embedding.
We utilize the following lemma to check it.
Let $V_Q[\lcolon i] = \{ u_j \mid j < i \}$, $C^{-1}(v) = \{ u_i \mid v \in C(u_i) \}$, $C^{-1}(S) = \bigcup_{v \in S} C^{-1}(v)$, and $C^{-1}(v)[\lcolon i] = \{ u_j \in C^{-1}(v) \mid j < i \}$.

\begin{lemma}
  \label{lem:matchable}
  Let $(u_i, v)$ be a candidate vertex and $S$ be its reservation.
  If $S$ holds (i) $\exists v' \in S, C^{-1}(v')[\lcolon i] = \emptyset$ or (ii) $\exists S' \subseteq S, |S'| > |C^{-1}(S')[\lcolon i]|$, no partial embedding matches $S$.
\end{lemma}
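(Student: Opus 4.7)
My plan is to prove the contrapositive: assume some partial embedding $M$ matches $S$ and show that neither hypothesis (i) nor (ii) can hold. The core observation is that matching forces the existence of an injection from $S$ into $V_Q[\lcolon i]$ that respects the candidate relation, and each hypothesis is a direct pigeonhole-style obstruction to such an injection.

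First I would unfold the definition of matching. If $M$ matches $S$ then $S \subseteq \mathrm{Im}(M[\lcolon i])$, so for every $v' \in S$ there exists some $u_{j(v')} \in V_Q[\lcolon i]$ with $M(u_{j(v')}) = v'$; uniqueness of $u_{j(v')}$ follows from the injectivity constraint on $M$. Since backtracking only ever assigns a query vertex to one of its candidate vertices, $v' \in C(u_{j(v')})$, equivalently $u_{j(v')} \in C^{-1}(v')[\lcolon i]$. Thus $f \colon v' \mapsto u_{j(v')}$ is an injection $S \hookrightarrow V_Q[\lcolon i]$ satisfying $f(v') \in C^{-1}(v')[\lcolon i]$ for every $v' \in S$.

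Next I would derive a contradiction under each case. Under (i), fix $v' \in S$ with $C^{-1}(v')[\lcolon i] = \emptyset$; then $f(v')$ has nowhere to go, contradicting the previous paragraph. Under (ii), fix $S' \subseteq S$ with $|S'| > |C^{-1}(S')[\lcolon i]|$; restricting $f$ to $S'$ gives an injection into $\bigcup_{v' \in S'} C^{-1}(v')[\lcolon i] = C^{-1}(S')[\lcolon i]$, so the pigeonhole principle forces $|S'| \le |C^{-1}(S')[\lcolon i]|$, contradicting the hypothesis. Observe that (i) is actually the special case $S' = \{v'\}$ of (ii); the lemma presumably lists the two separately for algorithmic convenience, since (i) is cheap to scan element-by-element whereas (ii) is a Hall-type condition on subsets.

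The only mild subtlety is the implicit assumption that partial embeddings assign each query vertex to one of its candidate vertices, since \cref{def:embedding} itself does not mention $C$. I would either carry this as a standing invariant of GuP's backtracking or add a single sentence making it explicit at the start of the proof. Beyond that, the argument is a one-step pigeonhole application and I do not anticipate any technical obstacle.
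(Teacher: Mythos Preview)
Your proposal is correct and mirrors the paper's own argument: both construct, from a matching partial embedding $M$, the injection $S \to V_Q[\lcolon i]$ given by $v' \mapsto M^{-1}(v')$ with image in $C^{-1}(v')[\lcolon i]$, and then derive a contradiction with (i) and (ii) by pigeonhole. Your remark that (i) is the singleton case of (ii), and your flagging of the implicit assumption that partial embeddings respect the candidate sets, are both accurate; the paper silently relies on the latter as well.
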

\begin{proof}
  We make a proof by cases.
  Regarding the case that \textbf{$S$ holds condition (i)},
    suppose that $v' \in S$ holds $C^{-1}(v')[\lcolon i] = \emptyset$.
    Then, we have $v' \not\in C(V_Q[\lcolon i])$.
    Since arbitrary partial embedding $M$ holds $\mathrm{Im}(M[\lcolon i]) \subseteq C(V_Q[\lcolon i])$, we have $v' \not\in \mathrm{Im}(M[\lcolon i])$.
    It follows that $S \not\subseteq \mathrm{Im}(M[\lcolon i])$.
  Regarding the case that \textbf{$S$ holds condition (ii)},
    we argue by contradiction.
    Suppose there exists partial embedding $M$ such that $S \subseteq \mathrm{Im}(M[\lcolon i])$.
    Since $M$ is an injective mapping, we can have its inverse function $M^{-1}$.
    We have $M^{-1}(v') \in V_Q[\lcolon i]$ and $M^{-1}(v') \in C^{-1}(v')$ for all $v' \in \mathrm{Im}(M[\lcolon i])$, and thus $M^{-1}(v') \in C^{-1}(v')[\lcolon i]$ holds.
    Since $S \subseteq \mathrm{Im}(M[\lcolon i])$ holds by hypothesis, for all $S' \subseteq S$, we have $M^{-1}(S') \subseteq C^{-1}(S')[\lcolon i]$, and thus $|S'| \le |C^{-1}(S')[\lcolon i]|$ holds because $|M^{-1}(S')| = |S'|$.
    This contradicts supposition $\exists S' \subseteq S, |S'| > |C^{-1}(S')[\lcolon i]|$.
  Therefore, we have proved the statement.
\end{proof}

GuP considers a reservation to be \emph{matchable} if it satisfies neither condition (i) nor (ii) in \cref{lem:matchable}.

\begin{example}
  As shown in \cref{ex:reservation}, $\{v_0, v_1\}$ is a reservation of $(u_1, v_3)$.
  Let us check if this is matchable by using conditions (i) and (ii) in \cref{lem:matchable}.
  Both $C^{-1}(v_0)$ and $C^{-1}(v_1)$ are $\{u_0, u_4\}$, and thus $C^{-1}(v_0)[\lcolon 1] = C^{-1}(v_1)[\lcolon 1] = \{u_0\} \neq \emptyset$.
  Hence, condition (i) is not held.
  However, we have $|C^{-1}(\{v_0, v_1\})[\lcolon i]| = |\{u_0\}| = 1$, which is less than $|\{v_0, v_1\}|$ ($= 2$).
  Therefore, condition (ii) is held, and $\{v_0, v_1\}$ is not matchable as a reservation guard of $(u_1, v_3)$.
\end{example}

The second policy is to minimize the size of a reservation guard (i.e., $|R(u_i, v)|$) because a smaller reservation guard tends to be matched by more partial embeddings.
  By definition, an arbitrary candidate vertex $(u_i, v)$ has a \emph{trivial reservation} $\{v\}$ because any subembeddings rooted at $(u_i, v)$ contain $v$.
  Although this is clearly the smallest choice for $R(u_i, v)$, this does not reduce futile recursions because it just performs an ordinary injectivity check that ensures $v$ is not used in a partial embedding.
To obtain a non-trivial small reservation, GuP generates candidates of reservation guards by propagating the injectivity constraint in a bottom-up manner and selects the smallest one as a reservation guard.

\begin{definition}[Reservation guard candidate]
  \label{def:reservation-guard-candidate}
  Let $(u_i, v)$ be a candidate vertex, $u_j \in N_+(u_i)$, and $R(u_i, v)$ be a reservation of $(u_i, v)$.
  The \emph{reservation guard candidate} of $(u_i, v)$ regarding $u_j$ is the smallest set of data vertices
  \revision{R1}{D5}{}{that is (i) matchable as a reservation guard of $(u_i, v)$ and (ii) a superset of $\{v'\}$ or $R(u_j, v') \setminus \{v\}$ for all $v' \in N(v) \cap C(u_i)$.}
\end{definition}

\begin{lemma}
  \label{lem:bottomup-reservation-necessary-condition}
  Let $(u_i, v)$ be a candidate vertex and $u_j \in N_+(u_i)$.
  Suppose that $S$ is a reservation guard candidate of $(u_i, v)$ regarding $u_j$.
  Then, $S$ is a reservation of $(u_i, v)$.
\end{lemma}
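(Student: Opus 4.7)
The plan is to fix an arbitrary subembedding $M$ rooted at $(u_i, v)$ and exhibit an assignment in $M$ whose data vertex lies in $S$. By \cref{def:subembedding}, $M$ is an embedding of $Q[I]$ where $I$ is the set of inclusive descendants of $u_i$, $(u_i,v) \in M$, and $M(u_k) \in C(u_k)$ for every $u_k \in I$. The adjacency constraint together with $u_j \in N_+(u_i) \subseteq I$ forces $v' := M(u_j)$ to satisfy $v' \in N(v) \cap C(u_j)$, so $v'$ is exactly a vertex for which the definition of a reservation guard candidate makes a statement about $S$.

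Next I would restrict $M$ to a smaller scope in order to invoke the reservation property at $u_j$. Let $I_j$ denote the inclusive descendants of $u_j$; because every inclusive descendant is reached through forward neighbors, all elements of $I_j$ have vertex ID $\ge j$, and in particular $u_i \notin I_j$. Setting $M_j := M[I_j]$ gives an embedding of $Q[I_j]$ with $(u_j, v') \in M_j$ and $M_j(u_k) \in C(u_k)$, so $M_j$ is a subembedding rooted at $(u_j, v')$. Since $R(u_j, v')$ is, by hypothesis, a reservation of $(u_j, v')$, \cref{def:reservation} supplies some $u_k \in I_j$ with $M(u_k) \in R(u_j, v')$.

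Now I would use condition (ii) of \cref{def:reservation-guard-candidate}, which guarantees that for the particular $v' \in N(v) \cap C(u_j)$ arising above, either $v' \in S$ or $R(u_j, v') \setminus \{v\} \subseteq S$. In the first case $M(u_j) = v' \in S$ and we are done. In the second case, the witness $u_k \in I_j$ above gives $M(u_k) \in R(u_j, v')$; the only way this fails to land in $S$ would be $M(u_k) = v$, but then injectivity of $M$ on $I$ would force $u_k = u_i$, contradicting $u_i \notin I_j$. Hence $M(u_k) \in R(u_j, v') \setminus \{v\} \subseteq S$.

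The only subtlety, and the step I would plan most carefully, is the ID-ordering argument that excludes $u_i$ from $I_j$; without it, the witness $M(u_k)$ might coincide with $v$ and escape $S$ after the set-minus. Everything else is a straightforward unrolling of the definitions, and neither condition (i) of \cref{def:reservation-guard-candidate} nor the minimality clause is needed for this direction, matchability becomes relevant only when one wants the chosen $S$ to be useful for pruning.
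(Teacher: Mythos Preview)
Your proof is correct and follows essentially the same route as the paper's: both pin down $v' = M(u_j) \in N(v) \cap C(u_j)$ via the adjacency constraint, split on condition (ii) of \cref{def:reservation-guard-candidate}, and in the second case restrict $M$ to the inclusive descendants of $u_j$ to invoke the reservation property of $R(u_j, v')$, then use injectivity to rule out the witness landing on $v$. The paper frames this as a contradiction argument, and you make the step $u_i \notin I_j$ (via the ID ordering $i < j$) explicit where the paper leaves it implicit, but these are cosmetic differences.
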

\begin{proof}
  We argue by contradiction.
  Suppose that $S$ is not a reservation of $(u_i, v)$.
  It follows that there exists $(u_i, v)$-rooted subembedding $M$ that does not have an assignment to a vertex in $S$.
  Since $M$ satisfies the adjacency constraint, $M(u_j) \subseteq N(v) \cap C(u_j)$ holds.
  By hypothesis, we have $M(u_j) \in S$ or $R(u_j, M(u_j)) \setminus \{v\} \subseteq S$.
  Thus, we proceed by cases.
  Regarding \textbf{the case that $M(u_j) \in S$ holds},
    $M$ has an assignment to a vertex in $S$, which is a contradiction.
  Regarding \textbf{the case that $R(u_j, M(u_j)) \setminus \{v\} \subseteq S$ holds},
    let $I'$ be the inclusive descendant set of $u_j$.
    $M[I']$ has an assignment to a vertex in $R(u_j, M(u_j))$ because $M[I']$ is a subembedding rooted at $(u_j, M(u_j))$.
    In addition, since $M$ maps $u_i$ to $v$ and satisfies the injectivity constraint, $M[I']$ does not have an assignment to $v$.
    Hence, $M[I']$ has an assignment to a vertex in $R(u_j, M(u_j)) \setminus \{v\} \subseteq S$, which is a contradiction.
  Therefore, $S$ is a reservation of $(u_i, v)$.
\end{proof}

We can obtain reservation guard candidates via solving the vertex cover problem.

\begin{lemma}
  \label{lem:set-cover}
  Let $(u_i, v)$ be a candidate vertex, $u_j \in N_+(u_i)$, and
  \begin{equation}
    \label{eq:vertexcover-universe}
    E_R = \{ (v', w) \mid v' \in N(v) \cap C(u_j), w \in R(u_j, v') \setminus \{v\} \}.
  \end{equation}
  Consider graph $G_R = (V_R, E_R)$ where $V_R$ is a set of the data vertices that appear in $E_R$.
  In addition, suppose that $S \subseteq V_G$ holds that (i) $S$ is matchable as a reservation guard of $(u_i, v)$ and (ii) $S$ is the minimum vertex cover of $G_R$.
  Then, $S$ is the reservation guard candidate of $(u_i, v)$ regarding $u_j$.
\end{lemma}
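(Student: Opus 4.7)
The plan is to reduce the claim to a clean equivalence between vertex covers of $G_R$ and the disjunctive superset condition in \cref{def:reservation-guard-candidate}. Concretely, I would show that a set $S \subseteq V_G$ is a vertex cover of $G_R$ if and only if, for every $v' \in N(v) \cap C(u_j)$, we have $\{v'\} \subseteq S$ or $R(u_j, v') \setminus \{v\} \subseteq S$. Once this equivalence is in hand, the minimality hypothesis and the matchability hypothesis assemble into the statement.

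For the backward direction of the equivalence, I would take an arbitrary edge $(v', w) \in E_R$ and use its construction: by definition $v' \in N(v) \cap C(u_j)$ and $w \in R(u_j, v') \setminus \{v\}$, so the disjunction gives either $v' \in S$ (covering the edge at the first endpoint) or $R(u_j, v') \setminus \{v\} \subseteq S$ (covering it through $w$). For the forward direction, I would fix $v' \in N(v) \cap C(u_j)$ and split on $R(u_j, v') \setminus \{v\}$: when it is empty, $\emptyset \subseteq S$ makes the disjunction vacuous and $v'$ contributes no edge to $E_R$; otherwise, if $v' \notin S$, covering each incident edge $(v', w)$ forces every $w \in R(u_j, v') \setminus \{v\}$ to lie in $S$, yielding the required inclusion.

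With the equivalence established, the lemma follows directly. Condition (ii) of the hypothesis says $S$ has the smallest cardinality among all vertex covers of $G_R$, and by the equivalence, among all sets satisfying the superset clause of \cref{def:reservation-guard-candidate}; condition (i) adds that $S$ is matchable. Since no strictly smaller set can satisfy the superset clause at all, none can satisfy both clauses, and $S$ itself satisfies both, so $S$ is the smallest set meeting both clauses, i.e., the reservation guard candidate of $(u_i, v)$ regarding $u_j$. The one subtlety worth flagging is the case $R(u_j, v') \setminus \{v\} = \emptyset$, where such a $v'$ is isolated in $G_R$ (in fact absent from $V_R$) and yet the disjunction in \cref{def:reservation-guard-candidate} must still be verified without membership in $S$; this is resolved by $\emptyset \subseteq S$. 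I do not foresee a serious obstacle beyond bookkeeping this empty case consistently.
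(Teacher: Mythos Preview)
Your proposal is correct and follows essentially the same route as the paper: both arguments observe that a vertex cover of $G_R$ satisfies the disjunctive superset clause in \cref{def:reservation-guard-candidate}, and then combine this with matchability and minimality. Your version is in fact more thorough than the paper's, since you explicitly prove the converse direction of the equivalence (any set satisfying the superset clause is a vertex cover), which is needed to transfer minimality, and you handle the empty $R(u_j,v')\setminus\{v\}$ case; the paper leaves both of these implicit.
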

\begin{proof}
  Let $v' \in N(v) \cap C(u_j)$ and $w \in R(u_j, v') \setminus \{v\}$.
  Since $S$ is a vertex cover, $S$ contains either or both of $v'$ and $w$.
  If $v' \not\in S$, by the definition of $E_R$, we have $w \in S$ for all $w \in R(u_j, v') \setminus \{v\}$.
  Hence, $S$ is a superset of $\{v'\}$ or $R(u_j, v') \setminus \{v\}$ for all $v' \in N(v) \cap C(u_j)$.
  Since $S$ is the smallest set that is matchable and covers $G_R$, it is the reservation guard candidate of $(u_i, v)$ regarding $u_j$.
\end{proof}

Since the reservation guard candidate must be matchable, it is not necessarily defined for all the forward neighbors.
If none of them has a reservation guard candidate, GuP resorts to a trivial reservation.
Otherwise, the smallest one is chosen for the reservation guard.
Specifically, a reservation guard is defined as follows.

\begin{definition}[Reservation guard]
  \label{def:reservation-guard}
  Let $(u_i, v)$ be a candidate vertex.
  The reservation guard on $(u_i, v)$, denoted by $R(u_i, v)$, is defined as follows.
  \begin{enumerate}
    \item If $N_+(u_i) = \emptyset$ or the reservation guard candidate of $(u_i, v)$ regarding $u_j$ is undefined for all $u_j \in N_+(u_i)$, $R(u_i, v) = \{v\}$.
    \item Otherwise, $R(u_i, v)$ is the smallest reservation guard candidate of $(u_i, v)$ regarding $u_j$ where $u_j \in N_+(u_i)$.
  \end{enumerate}
\end{definition}

\begin{example}
  \cref{fig:overall-example-gcs} shows reservation guard $R$ of each candidate vertex.
  For example, $R(u_2, v_5)$ is obtained from the reservation guards of forward-adjacent candidate vertices, $R(v_3, v_9)$, $R(u_4, v_0)$, and $R(u_4, v_{13})$, as follows.
  $R(u_4, v_0)$ and $R(u_4, v_{13})$ is given by the trivial reservations, $\{v_0\}$ and $\{v_{13}\}$, respectively.
  Since $u_3$ has only forward neighbor $u_4$, $R(u_3, v_9)$ is given by the reservation guard candidate regarding $u_4$.
  It is the smallest matchable superset of $\{v_0\}$ or $R(u_4, v_0) \setminus \{v_9\}$ ($= \{v_0\}$), and thus $R(u_3, v_9) = \{v_0\}$.
  Next, we consider the reservation guard candidates of $(u_2, v_5)$ regarding $u_3$ and $u_4$.
  That regarding $u_3$ is $R(u_3, v_9) \setminus \{v_5\}$ ($= \{v_0\}$) because $\{v_9\}$ is not matchable.
  Regarding $u_4$, there does not exist a matchable superset of both $\{v_0\}$ and $\{v_{13}\}$ because of condition (i) in \cref{lem:matchable} ($C^{-1}(v_{13})[\lcolon 2] = \emptyset$).
  Hence, the reservation guard of $(u_2, v_5)$ is given by the candidate regarding $u_3$, which is $\{v_0\}$.
\end{example}

\revisiontext{
\cref{def:reservation-guard} provides a reservation guard effective for pruning, but it still possibly produces a large reservation guard in theory.
  Such reservation guards are not only rarely matched by partial embeddings but also increase the cost of the reservation guard generation and matching test.
  Thus, we introduce user-defined parameter $r$ to specify the upper bound of the size of reservation guards.
  Fortunately, our empirical study in \cref{sec:parameter-r} showed that a reservation guard preserves its pruning power even if the size is limited to 3 (i.e., $r = 3$).
}

\begin{algorithm}[!t]
  \small
  \caption{Reservation guard generation}
  \begin{algorithmic}[1]
    \Require Candidate-vertex set $C$ and reservation size limit $r$
    \Ensure Reservation guard $R(u_i, v)$ for all $u_i \in V_Q$ and $v \in C(u_i)$
    \For{$u_i \in V_Q$ in reverse order and $v \in C(u_i)$} \label{ln:resv-gen-cand-loop}
      \State $R(u_i, v) \gets \{v\}$
      \For{$u_j \in N_+(u_i)$} \label{ln:resv-gen-nbr-loop}
        \State \revision{R1}{D6}{}{Construct graph $G_R$ from edge set $E_R$ (\cref{eq:vertexcover-universe})} \label{ln:resv-gen-u}
        \State Find vertex cover $S$ over $G_R$ s.t.~$|S| \le r$ and $S$ is matchable
        \If{such $S$ is found, and $R(u_i, v) = \{v\}$ or $|R(u_i, v)| < |S|$}
          \State $R(u_i, v) \gets S$
        \EndIf
      \EndFor
    \EndFor
  \end{algorithmic}
  \label{alg:reservation-guard-generation}
\end{algorithm}

\cref{alg:reservation-guard-generation} shows the pseudocode of the reservation guard generation.
This algorithm computes $R(u_i, v)$ in the descending order of $u_i$ so that the reservation guards of all the forward-adjacent candidate vertices are computed before $R(u_i, v)$.
The loop at line \ref{ln:resv-gen-nbr-loop} computes $S$, the reservation guard candidate of $(u_i, v)$ regarding $u_j \in N_+(u_i)$.
Since the vertex cover problem is NP-hard, GuP employs a 2-approximate algorithm described in \cite{Cormen2009}, which iteratively chooses an edge in $E_R$ and adds endpoints to $S$ until all the vertices are covered.
To find matchable $S$ whose size does not exceed $r$, our algorithm chooses an edge that keeps $S$ matchable by checking the conditions in \cref{lem:matchable} and stops the iteration if $|S|$ exceeds $r$.
If such $S$ is found for at least one of $u_j \in N_+(u_i)$, the smallest one is chosen for $R(u_j, v)$.
Otherwise, $R(u_j, v)$ is set to trivial reservation $\{v\}$.
We analyze the complexity of this algorithm in \cref{sec:complexity-analysis}.

\subsection{Nogood guard}
\label{sec:nogood-guard}

This section first defines a nogood guard and then details the nogood discovery rule for a nogood guard on vertices and edges, respectively.

\subsubsection{Definitions}

A nogood guards is a set of assignments that compose a nogood with a candidate vertex or the endpoints of candidate edges where it is attached.

\begin{definition}[Nogood]
\label{def:nogood}
Set of assignment $D \subseteq V_Q \times V_G$ is a \emph{nogood} if and only if there does not exist any full embedding $M$ such that $D \subseteq M$.
\end{definition}

\revision{R1}{O1}{example-nogood}{
For example, $\{ (u_0, v_0), (u_4, v_0) \}$ is a nogood because any partial embedding including these assignments violates the injectivity constraint.
}

\begin{definition}[Nogood guard on vertices]
\label{def:vertex-nogood}
\revision{R1}{O1}{def-vertex-nogood}{
Let $(u_i, v)$ be a candidate vertex.
A nogood guard on $(u_i, v)$, denoted by $\mathit{NV}(u_i, v)$, is a subset of $V_Q[\lcolon i] \times V_G$ such that $\mathit{NV}(u_i, v) \cup \{(u_i, v)\}$ is a nogood.
}
\end{definition}

\begin{definition}[Nogood guard on edges]
\label{def:edge-nogood}
\revision{R1}{O1}{def-edge-nogood}{
Let $((u_i, v), (u_j, v'))$ be a candidate edge.
A nogood guard on $((u_i, v), (u_j, v'))$, denoted by $\mathit{NE}((u_i, v), (u_j, v'))$, is a subset of $V_Q[\lcolon i] \times V_G$ such that $\mathit{NE}((u_i, v),\allowbreak (u_j, v')) \cup \{(u_i, v), (u_j, v')\}$ is a nogood.
}
\end{definition}

\begin{definition}[Matching with a nogood guard]
We say that partial embedding $M$ \emph{matches} $\mathit{NV}(u_i, v)$ or $\mathit{NE}((u_i, v), (u_j, v'))$ if and only if $M$ is a superset of them.
\end{definition}

Since a partial embedding including a nogood never yields any full embeddings, nogoods are useful for pruning.
Suppose that $M$ is a partial embedding of length $i$.
We can filter out candidate vertex $(u_i, v)$ if $M$ matches $\mathit{NV}(u_i, v)$ because $M \oplus v$ is a superset of $\mathit{NV}(u_i, v) \cup \{(u_i, v)\}$.
Similarly, we can filter out candidate edge $((u_i, v), (u_j, v'))$ if $M$ matches $\mathit{NE}((u_i, v), (u_j, v'))$.
\revisionx{R1}{D7}{}{
Filtering out the edge can be rephrased as prohibiting mapping query edge $(u_i, u_j)$ to data edge $(v, v')$ by two assignments $(u_i, v)$ and $(u_j, v')$.
}
Such mapping makes a deadend because $M \cup \{(u_i, v), (u_j, v')\}$ is a superset of $\mathit{NE}((u_i, v), (u_j, v')) \cup \{(u_i, v), (u_j, v')\}$, which is a nogood.

To generate a nogood guard, we need to discover a nogood from deadend partial embeddings encountered during the backtracking. 
The following sections describe how to do it.

\subsubsection{Nogood Guards on Vertices}
\label{sec:vertex-nogood-guard}

If a partial embedding is a deadend, the set of all its assignments is a nogood by the definition.
However, such nogoods are ineffective for pruning because the same partial embedding does not appear again during the search, and so no partial embedding matches it.
For higher effectiveness, a smaller nogood is preferred because it is expected to be a subset of more partial embeddings.
Thus, we need to carefully drop assignments irrelevant to the conflicts in a deadend.

A deadend violates some of the constraints of isomorphism.
To capture the adjacency constraint, we introduce a set of local candidate vertices, which satisfy the adjacency constraint between the assignments in a partial embedding.

\begin{definition}[Local candidate-vertex set]
  \label{def:local-candidate-vertex-set}
  Let $M$ be a partial embedding.
  The \emph{local candidate-vertex set} of $u_i$ under $M$, denoted by $C(u_i; M)$, is a set of $v \in C(u_i)$ that holds, for all $u_j \in N_-(u_i)$, (i) $v \in N(M(u_j))$ and (ii) $M$ does not match $\mathit{NE}((u_j, M(u_j)),\allowbreak (u_i, v))$.
\end{definition}

This definition uses a nogood guard on edges because we filter out candidate edges if $M$ matches their guards.
To satisfy the adjacency constraint, a local candidate vertex is used for extending $M$.
In other words, local candidate-vertex sets confine the search space.
Hence, the assignments involved in the computation of a local candidate-vertex set are relevant to generating a deadend.
Conversely, the assignments that have no influence on a local candidate-vertex set are irrelevant to a deadend, and so we can drop them to reduce the size of nogood.
Based on this idea, we define a bounding set, which is a set of query vertices whose assignments determine the local candidate-vertex sets.

\begin{definition}[Bounding set]
  \label{def:bounding-set}
  Let $M$ be a partial embedding and $u_i \in V_Q$.
  The \emph{bounding set of $u_i$ under $M$}, denoted by $B(u_i; M)$, is the union of $B_\mathrm{adj}$ and $B_\mathrm{guard}$.
  $B_\mathrm{adj}$ is the set of $u_j \in N_-(u_i)$ such that $C(u_i; M[\lcolon j]) \neq C(u_i; M[\lcolon j + 1])$.
  $B_\mathrm{guard}$ is the union of $\mathrm{dom}(\mathit{NE}((u_j, M(u_j)), (u_i, v)))$ for all combinations of $u_j \in N_-(u_i)$ and $v \in C(u_i; M[\lcolon j])$ such that $M$ matches $\mathit{NE}((u_j, M(u_j)),\allowbreak (u_i, v))$.
\end{definition}

$B_\mathrm{adj}$ is a set of query vertices whose assignment reduces the size of the local candidate-vertex set by its adjacency relation regardless of guards.
It is examined by condition $C(u_j; M[\lcolon i]) \neq C(u_j; M[\lcolon i + 1])$, which says that $u_i$ is included in the bounding set if adding assignment $(u_i, M(u_i))$ changes the local candidate-vertex set of $u_j$.
On the other hand, $B_\mathrm{guard}$ is a set of query vertices whose assignment commits in the match between $M$ and nogood guards.
They are also involved in the decision of the local candidate-vertex set because they are necessary to let $M$ match the nogood guards and filter out some edges.

\begin{example}
  \revision{R1}{O1}{example-bounding-set}{
  \label{ex:bounding-set}
  Let $M = \{ (u_0, v_0), (u_1, v_3) \}$ and assume that $M$ do not match any nogood guard on edges.
  Consider the bounding set of $u_2$ under $M$.
  We have $N_-(u_2) = \{u_0, u_1\}$.
  $B_\mathrm{adj}$ contains $u_0$ because $C(u_2; M[\lcolon 0]) = \{ v_5, v_6, v_7, v_8 \}$, which differs from $C(u_2; M[\lcolon 1]) = C(u_2; \{ (u_0, v_0) \}) = \{ v_5, v_6, v_7 \}$.
  On the other hand, $B_\mathrm{adj}$ does not contain $u_1$ because $N(M(u_1))$ is a superset of $C(u_2; M[\lcolon 1])$ and thus $C(u_2; M[\lcolon 2]) = C(u_2; M[\lcolon 1])$.
  In addition, $B_\mathrm{guard} = \emptyset$ by assumption.
  Therefore, we have $B(u_2; M) = \{ u_0 \}$.
  }
\end{example}

\begin{lemma}
  \label{lem:bounding-set}
  Let $M$ and $M'$ be a partial embedding and $u_i \in V_Q$.
  If $M[B(u_i; M)] \subseteq M'$, we have $C(u_i; M) \supseteq C(u_i; M')$.
\end{lemma}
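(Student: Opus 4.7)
The plan is to show every $v \in C(u_i; M')$ lies in $C(u_i; M)$, and to get there I would induct on $j$ from $0$ to $|M|$ on the intermediate claim $v \in C(u_i; M[\lcolon j])$; the case $j = |M|$ then yields the conclusion. The base case $j = 0$ is immediate because $C(u_i; M[\lcolon 0]) = C(u_i)$ and any $v \in C(u_i; M')$ is in $C(u_i)$.

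For the inductive step, assume $v \in C(u_i; M[\lcolon j])$ and suppose for contradiction that $v \notin C(u_i; M[\lcolon j+1])$; the non-trivial case is $u_j \in \mathrm{dom}(M)$, in which $M[\lcolon j+1] = M[\lcolon j] \cup \{(u_j, M(u_j))\}$. Since $v$ is dropped at this step, $C(u_i; M[\lcolon j]) \neq C(u_i; M[\lcolon j+1])$, so $u_j \in B_\mathrm{adj} \subseteq B(u_i; M)$, and the hypothesis $M[B(u_i; M)] \subseteq M'$ gives $u_j \in \mathrm{dom}(M')$ with $M'(u_j) = M(u_j)$.

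I would then case-split on which clause of \cref{def:local-candidate-vertex-set} excludes $v$ at step $j+1$. If clause~(i) fails at $u_j$, so $v \notin N(M(u_j)) = N(M'(u_j))$, this contradicts $v \in C(u_i; M')$ because $u_j \in N_-(u_i) \cap \mathrm{dom}(M')$. If clause~(ii) fails for some $u_k \in N_-(u_i) \cap \mathrm{dom}(M[\lcolon j+1])$, then $\mathit{NE}((u_k, M(u_k)), (u_i, v)) \subseteq M[\lcolon j+1] \subseteq M$, so $M$ matches this guard; combining $v \in C(u_i; M[\lcolon j])$ with the monotonicity of $C(u_i; \cdot)$ along $M[\lcolon 0] \subseteq \cdots \subseteq M[\lcolon j]$ yields $v \in C(u_i; M[\lcolon k])$, and the $B_\mathrm{guard}$ clause of \cref{def:bounding-set} then places the guard's domain inside $B(u_i; M)$; hence the whole guard is contained in $M[B(u_i; M)] \subseteq M'$, so $M'$ also matches it, contradicting $v \in C(u_i; M')$.

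The delicate point I expect to wrestle with is the very last contradiction: for $M'$ to match the same guard that removed $v$, one needs $u_k \in \mathrm{dom}(M')$ with $M'(u_k) = M(u_k)$, which is immediate only when $u_k \in B(u_i; M)$. The leverage I would use is that whenever a newly matched nogood guard causes $v$ to be dropped at some step, the corresponding endpoint $u_k$ must itself lie in $B_\mathrm{adj}$ at the step where its assignment first registers the guard's effect on $C(u_i; \cdot)$, forcing $u_k$ into $B(u_i; M)$ and thereby into $\mathrm{dom}(M')$ with the matching image.
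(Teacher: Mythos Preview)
Your argument is correct and mirrors the paper's proof, which likewise assumes some $v\in C(u_i;M')\setminus C(u_i;M)$, locates the offending $u_j\in N_-(u_i)$, and case-splits on clauses~(i) and~(ii) of \cref{def:local-candidate-vertex-set}; your induction on $j$ is simply an explicit way of pinpointing the first step at which $v$ drops out. In fact your version is slightly more careful: applying the $B_\mathrm{guard}$ clause of \cref{def:bounding-set} requires $v\in C(u_i;M[\lcolon j])$, which the paper uses tacitly but you supply via the inductive hypothesis.

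The ``delicate point'' in your last paragraph dissolves once you note that the failing index must be $k=j$. By \cref{def:edge-nogood} the guard $\mathit{NE}((u_k,M(u_k)),(u_i,v))$ has domain contained in $V_Q[\lcolon k]$, so if $k<j$ and $M[\lcolon j+1]$ matches it then already $M[\lcolon j]$ does, contradicting $v\in C(u_i;M[\lcolon j])$. Hence $u_k=u_j$, which you have already placed in $B_\mathrm{adj}\subseteq B(u_i;M)$, giving $u_k\in\mathrm{dom}(M')$ with $M'(u_k)=M(u_k)$ directly and closing the contradiction without any further wrestling.
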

\begin{proof}
  We make a proof by contradiction.
  Suppose that there exists $v \in C(u_i)$ such that $v \in C(u_i; M')$ and $v \not\in C(u_i; M)$.
  From $v \not\in C(u_i; M)$, there exists $u_j \in N_-(u_i) \cap \mathrm{dom}(M)$ that holds either or both of (i) $v \not\in N(M(u_j))$ and that (ii) $\mathit{NE}((u_j, M(u_j)), (u_i, v))$ is matched by $M$.
  Regarding \textbf{case (i)},
    we have $u_j \in B(u_i; M)$ by \cref{def:bounding-set} because $C(u_i; M[\lcolon j]) \neq C(u_i; M[\lcolon j + 1])$.
    Since $M[B(u_i; M)] \subseteq M'$ by hypothesis, $M(u_j) = M'(u_j)$ holds.
    However, from $v \in C(u_i; M')$, we have $v \in N(M'(u_j)) = N(M(u_j))$, which is contradiction.
  Regarding \textbf{case (ii)},
    by \cref{def:bounding-set}, we have $\mathrm{dom}(\mathit{NE}((u_j, M(u_j)),\allowbreak (u_i, v))) \subseteq B(u_i; M)$.
    By hypothesis, $M[B(u_i; M)] \allowbreak\subseteq M'$ holds, and thus we have $\mathit{NE}((u_j, M(u_j)),\allowbreak (u_i, v)) \subseteq M'[\lcolon j]$.
    However, from $v \in C(u_i; M')$, $\mathit{NE}((u_j, M(u_j)),\allowbreak(u_i, v))$ is not matched by $M'$, which is contradiction.
  Therefore, we have $C(u_i; M) \supseteq C(u_i; M')$.
\end{proof}

In backtracking, each local candidate vertex is further checked whether it causes a conflict in a partial embedding.

\begin{definition}[Conflict]
  \label{def:conflict}
  Let $M$ be a partial embedding, $k$ be the length of $M$, and $v \in C(u_k; M)$.
  Extension $M \oplus v$ has a \emph{conflict} if any of the following conditions hold.
  \begin{enumerate}
    \item \emph{Injectivity conflict}: $M$ has an assignment to $v$.
    \item \emph{Reservation-guard conflict}: $M$ matches $R(u_k, v)$.
    \item \emph{Nogood-guard conflict}: $M$ matches $\mathit{NV}(u_k, v)$.
    \item \emph{No-candidate conflict}: There exists $u_i$ ($i > k$) that has no local candidate vertices under $M \oplus v$ (i.e., $C(u_i; M \oplus v) = \emptyset$).
  \end{enumerate}
\end{definition}

We can discover a nogood from extensions if they have conflicts.
To indicate assignments that constitute a nogood, we introduce \emph{mask} $K \subseteq V_Q$ that gives a nogood of partial embedding or extension $M$ as $M[K]$, where $M[K] = \{ (u_i, v) \in M \mid u_i \in K \}$.
The following definition gives the mask for extensions that have conflicts.

\begin{definition}[Conflict mask]
  \label{def:conflict-mask}
  Let $M$ be a partial embedding, $k$ be the length of $M$, and $v \in C(u_k, M)$.
  The \emph{conflict mask} of extension $M \oplus v$ is $\emptyset$ if the extension has no conflict; otherwise, it is defined for each conflict case as follows.
  \begin{enumerate}
    \item \emph{Injectivity conflict.}
      If $u_i \in \mathrm{dom}(M)$ holds $v = M(u_i)$, the conflict mask is $\{ u_i, u_k \}$.
    \item \emph{Reservation-guard conflict.}
      If $M$ matches $R(u_k, v)$, the conflict mask is $\{ u_j \mid \exists v' \in R(u_i, v), (u_j, v') \in M \} \cup \{ u_k \}$.
    \item \emph{Nogood-guard conflict.}
      Suppose that $M$ matches $\mathit{NV}(u_k, v)$.
      Then, the conflict mask is $\mathrm{dom}(\mathit{NV}(u_k, v)) \cup \{ u_k \}$.
    \item \emph{No-candidate conflict.}
      Suppose that $u_i$ has no local candidate vertices under $M \oplus v$.
      Then, the conflict mask is $B(u_i; M \oplus v)$.
  \end{enumerate}
\end{definition}

\begin{example}
  \label{ex:conflict-mask}
  \revision{R1}{O1}{example-conflict-mask}{
  Let $M = \{(u_0, v_0),\allowbreak(u_1, v_2),\allowbreak(u_2, v_6) \}$ and assume that $M$ do not match any nogood guards on edges.
  $M \oplus v_{11}$ has the no-candidate conflict because $v_6$ and $v_{11}$ do not have any common neighbor in $C(u_4)$ and so $C(u_4; M \oplus v_{11}) = \emptyset$.
  Therefore, the conflict mask of $M \oplus v_{11}$ is $B(u_4; M \oplus v_{11}) = \{ u_2, u_3 \}$.
  }
\end{example}

\begin{lemma}
  \label{lem:nogood-of-conflicting-extensions}
  Let $M$ be a partial embedding, $k$ be the length of $M$, $v \in C(u_k, M)$, and $K$ be the conflict mask of $M \oplus v$.
  If $M \oplus v$ has conflicts, $(M \oplus v)[K]$ is a nogood.
\end{lemma}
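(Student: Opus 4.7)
The plan is to proceed by case analysis on the four conflict types enumerated in \cref{def:conflict}, showing in each case that any hypothetical full embedding $\hat M\supseteq (M\oplus v)[K]$ contradicts some previously established property. In every case $(u_k,v)\in (M\oplus v)[K]$, so $\hat M(u_k)=v$, and since $K\setminus\{u_k\}\subseteq V_Q[\lcolon k]$, the mapping $\hat M$ agrees with $M$ on $K\setminus\{u_k\}$. After this common setup, the specific structural information captured by $K$ drives the contradiction.

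Cases 1--3 are essentially direct unpackings of the relevant definitions. For the injectivity conflict, $K=\{u_i,u_k\}$ forces $\hat M(u_i)=M(u_i)=v=\hat M(u_k)$, violating the injectivity constraint of \cref{def:embedding}. For the reservation-guard conflict, $K$ is chosen so that $M[K\setminus\{u_k\}]$ is precisely an $M$-preimage of $R(u_k,v)$; consequently $\mathrm{Im}(\hat M[\lcolon k])\supseteq R(u_k,v)$, and \cref{lem:reservation} then says that $\hat M[\lcolon k]\cup\{(u_k,v)\}$ is a deadend, contradicting that the full embedding $\hat M$ extends it. For the vertex nogood-guard conflict, $K\supseteq\mathrm{dom}(\mathit{NV}(u_k,v))$ gives $(M\oplus v)[K]\supseteq \mathit{NV}(u_k,v)\cup\{(u_k,v)\}$, which is a nogood by \cref{def:vertex-nogood}; any superset of a nogood is a nogood, and the claim follows.

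The genuinely nontrivial case is the no-candidate conflict, and I expect this to be the main obstacle. Here $K=B(u_i;M\oplus v)$ and the only hypothesis is $C(u_i;M\oplus v)=\emptyset$, so the contradiction must somehow transfer the emptiness of a \emph{local} candidate set from the partial embedding to the entire $\hat M$. The plan is to invoke \cref{lem:bounding-set} with $M\oplus v$ playing the role of $M$ and $\hat M$ playing the role of $M'$: the inclusion $(M\oplus v)[K]\subseteq\hat M$ is exactly its premise, yielding $C(u_i;\hat M)\subseteq C(u_i;M\oplus v)=\emptyset$. To finish I would argue that $\hat M(u_i)$ must lie in the empty set $C(u_i;\hat M)$: (i)~$\hat M(u_i)\in C(u_i)$ by soundness of candidate filtering, (ii)~$\hat M(u_i)\in N(\hat M(u_j))$ for every $u_j\in N_-(u_i)$ by the adjacency constraint, and (iii)~$\hat M$ cannot match $\mathit{NE}((u_j,\hat M(u_j)),(u_i,\hat M(u_i)))$ for any such $u_j$, since otherwise \cref{def:edge-nogood} would exhibit $\hat M$ as containing a nogood, contradicting that it is a full embedding. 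Collecting these three facts places $\hat M(u_i)$ in $C(u_i;\hat M)=\emptyset$, the desired contradiction.
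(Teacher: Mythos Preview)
Your proof is correct and follows the same case-by-case strategy as the paper, invoking \cref{lem:reservation} for case~2 and \cref{lem:bounding-set} for case~4 exactly as the paper does. One small inaccuracy in your common setup: for the no-candidate conflict $K=B(u_i;M\oplus v)$ need not contain $u_k$, so your blanket claim $(u_k,v)\in(M\oplus v)[K]$ fails there---but you never actually use that fact in case~4, so no harm is done. Your treatment of case~4 is in fact more thorough than the paper's, which stops at $C(u_i;\hat M)=\emptyset$ and simply declares ``contradiction'' without spelling out, as you do, why $\hat M(u_i)$ must belong to $C(u_i;\hat M)$.
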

\begin{proof}
  We make a proof for each conflict case.
  Regarding \textbf{the injectivity conflict},
    suppose that $K = \{u_i, u_k\}$. By \cref{def:conflict-mask}, we have $M[u_i] = v$, and thus $(M \oplus v)[K] = \{(u_i, v), (u_k, v)\}$.
    This is a nogood because of the violation of the injectivity constraint.
  Regarding \textbf{the reservation-guard conflict}, 
    suppose that there exists full embedding $\hat{M}$ such that $(M \oplus v)[K] \subseteq \hat{M}$.
    We have $\mathrm{Im}(\hat{M}[\lcolon k]) \supseteq \mathrm{Im}(\hat{M}[K]) = \mathrm{Im}(M[K]) \supseteq R(u_k, v)$.
    In addition, $(u_k, v) \in \hat{M}$ because $u_k \in K$.
    From \cref{lem:reservation}, $\hat{M}[\lcolon k] \cup \{(u_k, v)\}$ is a nogood.
    Thus, $\hat{M}$ includes a nogood, which is a contradiction.
    Therefore, $(M \oplus v)[K]$ is a nogood.
  Regarding \textbf{the nogood-guard conflict},
    we have $(M \oplus v)[K] = \mathit{NV}(u_k, v) \cup \{(u_k, v)\}$, which is a nogood by \cref{def:vertex-nogood}.
  Regarding \textbf{the no-candidate conflict},
    suppose that there exists full embedding $\hat{M}$ such that $(M \oplus v)[K] \subseteq \hat{M}$.
    From $B(u_i; M \oplus v) \subseteq K$, we have $(M \oplus v)[B(u_i; M \oplus v)] \subseteq \hat{M}$.
    Here \cref{lem:bounding-set} gives $C(u_i; \hat{M}) \subseteq C(u_i; M \oplus v) = \emptyset$, which is contradiction.
    Hence, $(M \oplus v)[K]$ is a nogood.
  We have shown that $(M \oplus v)[K]$ is a nogood for all the cases.
\end{proof}

The conflict mask defines a nogood discovered from extensions with conflicts.
However, even if an extension is free from a conflict, it can be found to be a deadend if it fails to yield a full embedding in subsequent recursions.
Hence, we generalize the definition to extensions with and without conflicts.

\begin{definition}[Deadend mask]
  \label{def:deadend-mask}
  Let $M$ be an extension and $k$ be the length of $M$.
  In addition, for any $v'$, suppose that $K_{v'}$ is the deadend mask of $M \oplus v'$.
  The \emph{deadend mask} of $M$ is given by $K$ defined as follows.
  The cases are listed in order of priority.
  
  \begin{enumerate}
    \item If $M$ is not a deadend, $K = \emptyset$.
    \item If $M$ has a conflict, $K$ is the conflict mask of $M$.
    \item If some $v' \in C(u_k; M)$ holds $u_k \not\in K_{v'}$, $K = K_{v'}$.
    \item Otherwise, $K = \bigcup_{v' \in C(u_k; M)} K_{v'} \cup B(u_k; M) \setminus \{ u_k \}$.
  \end{enumerate}
\end{definition}

\begin{example}
  \label{ex:deadend-mask}
  \revision{R1}{O1}{example-deadend-mask}{
  Let $M = \{ (u_0, v_0), (u_1, v_2), (u_2, v_6) \}$.
  $M$ does not have a conflict but is a deadend because $C(u_3; M) = \{ v_{11} \}$ and $M \oplus v_{11}$ has the no-candidate conflict.
  Let $K_{v_{11}}$ be the deadend mask of $M \oplus v_{11}$.
  $K_{v_{11}}$ is given by the conflict mask of $M \oplus v_{11}$, which is $\{ u_2, u_3 \}$ (\cref{ex:conflict-mask}).
  Hence, the deadend mask of $M$ is $K_{v_{11}} \cup B(u_3; M) \setminus \{u_3\} = \{ u_2, u_3 \} \cup \{u_2, u_3 \} \setminus \{u_3\} = \{u_2\}$.
  }
\end{example}

\begin{lemma}
  \label{lem:nogood-of-deadends}
  Let $M$ be an extension and $K$ be the deadend mask of $M$.
  If $M$ is a deadend, $M[K]$ is a nogood.
\end{lemma}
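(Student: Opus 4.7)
My plan is to induct on $|V_Q| - |M|$. The recursive definition of the deadend mask (\cref{def:deadend-mask}) naturally drives this induction, since $K$ is built from the deadend masks $K_{v'}$ of extensions $M \oplus v'$ whose length is one greater than $|M|$. The base case $|M| = |V_Q|$ falls under case~2 of the definition, because a length-$|V_Q|$ extension that is a deadend cannot be a valid full embedding and must therefore carry some conflict; \cref{lem:nogood-of-conflicting-extensions} then finishes this case. For the inductive step I would case-split on \cref{def:deadend-mask}.

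Case~2 is again immediate from \cref{lem:nogood-of-conflicting-extensions}. Case~3 should be easy: since $M$ is a deadend, every extension $M \oplus v'$ is also a deadend, and the induction hypothesis gives that $(M \oplus v')[K_{v'}]$ is a nogood; because $u_k \notin K_{v'}$, this set equals $M[K_{v'}] = M[K]$, which is what we need.

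The hard part will be case~4, where $u_k \in K_{v'}$ for every $v' \in C(u_k; M)$, so we cannot simply inherit a single $K_{v'}$ from an extension. I plan to argue by contradiction: suppose $\hat{M}$ is a full embedding with $M[K] \subseteq \hat{M}$ and set $v' = \hat{M}(u_k)$, aiming to derive a contradiction via the induction hypothesis on $M \oplus v'$. The key step is to show $v' \in C(u_k; M)$, which then lets me invoke case~4's hypothesis to get $u_k \in K_{v'}$. I would obtain this from $K \supseteq B(u_k; M)$, yielding $M[B(u_k; M)] \subseteq \hat{M}$; \cref{lem:bounding-set} then gives $C(u_k; \hat{M}) \subseteq C(u_k; M)$, and a full embedding trivially satisfies adjacency and cannot match any edge nogood guard (otherwise it would contain a nogood), so $v' = \hat{M}(u_k) \in C(u_k; \hat{M})$.

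Once $v' \in C(u_k; M)$ is established, the induction hypothesis applied to $M \oplus v'$ says that $(M \oplus v')[K_{v'}]$ is a nogood, so it remains to check $(M \oplus v')[K_{v'}] \subseteq \hat{M}$: the assignment at $u_k$ matches by construction of $v'$, and the other assignments live in $M[K_{v'} \setminus \{u_k\}] \subseteq M[K] \subseteq \hat{M}$. That forces $\hat{M}$ to contain a nogood, the desired contradiction.
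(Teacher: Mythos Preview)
Your proposal is correct and follows essentially the same route as the paper's proof: both argue by induction (you make the induction parameter $|V_Q|-|M|$ explicit, while the paper phrases it as structural induction with case~(2) as the base), handle case~(3) by the identity $(M\oplus v')[K_{v'}]=M[K]$ from $u_k\notin K_{v'}$, and handle case~(4) by contradiction via \cref{lem:bounding-set} to place $\hat{M}(u_k)$ in $C(u_k;M)$ and then invoke the induction hypothesis. Your justification that $\hat{M}(u_k)\in C(u_k;\hat{M})$ is slightly more detailed than the paper's, which simply asserts it.
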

\begin{proof}
  We make a proof by cases for each of cases (1)--(4) in \cref{def:deadend-mask}.
  Regarding \textbf{case (1)},
    we ignore this case because $M$ is a deadend by hypothesis.
  Regarding \textbf{case (2)},
    by \cref{lem:nogood-of-conflicting-extensions}, $M[K]$ is a nogood.
  Regarding \textbf{case (3)},
    we prove this case by induction.
    The base case is case (2).
    Suppose that $v' \in C(u_k; M)$ gives $K_{v'}$ such that $u_k \not\in K_{v'}$, and $(M \oplus v')[K_{v'}]$ is a nogood.
    From $K = K_{v'}$ and $u_k \not\in K_{v'}$, we have $(M \oplus v')[K_{v'}] = M[K]$.
    Thus, $M[K]$ is a nogood.
  Regarding \textbf{case (4)},
    similar to case (3), we use induction using case (2) as the base case; suppose that, for all $v' \in C(u_k; M)$, $(M \oplus v')[K_{v'}]$ is a nogood.
    Moreover, we use a proof by contradiction; suppose that there exists full embedding $\hat{M}$ such that $M[K] \subseteq \hat{M}$.
    Then, we have $\hat{M}(u_k) \in C(u_k; \hat{M}) \subseteq C(u_k; M)$ because $B(u_k; M) \subseteq K$ and \cref{lem:bounding-set}.
    Since $K \cup \{u_k\}$ includes the deadend mask of $M \oplus v'$ for all $v' \in C(u_k; M)$, it also includes the deadend mask of $M \oplus \hat{M}(u_k)$.
    Thus, $(M \oplus \hat{M}(u_k))[K \cup \{u_k\}]$ is a nogood.
    Since $(M \oplus \hat{M}(u_k))[K \cup \{u_k\}] = \hat{M}[K \cup \{u_k\}]$, $\hat{M}$ includes a nogood, which is contradiction.
    Therefore, $M[K]$ is a nogood.
  We have proved that $M[K]$ is a nogood for all the cases.
\end{proof}

GuP discovers a nogood from deadends by using the deadend mask.
Specifically, when extension $M$ has a conflict or is determined to be a deadend after the exploration, GuP obtains nogood $M[K]$ where $K$ is the deadend mask of $M$.
Then, letting $(u_i, v)$ be the last assignment in $M[K]$, GuP records $M[K] \setminus \{(u_i, v)\}$ in $\mathit{NV}(u_i, v)$.
Such $\mathit{NV}(u_i, v)$ holds \cref{def:vertex-nogood}.
Note that $\mathit{NV}(u_i, v)$ is overwritten if it has an old value.
In this way, GuP generates nogood guards on vertices during the backtracking.

\begin{example}
  \label{ex:vertex-nogood-guard}
  \revision{R1}{O1}{example-nogood-guard}{
  Let $M = \{ (u_0, v_0), (u_1, v_2), (u_2, v_6) \}$.
  Since deadend mask $K$ of $M$ is $\{u_2\}$ (\cref{ex:deadend-mask}), GuP records $M[\{u_2\}] \setminus \{(u_2, v_6)\} = \emptyset$ in $\mathit{NV}(u_2, v_6)$.
  Note that $\emptyset$ is a subset of an arbitrary set, and hence $(u_2, v_6)$ is never used in the subsequent search.
  }
\end{example}

As shown in \cref{ex:vertex-nogood-guard}, GuP can filter out unnecessary candidate vertices for all partial embeddings, besides adaptive filtering depending on a partial embedding.
This is because GuP can capture a lack of cycles during backtracking.
To the best of our knowledge, the existing candidate filtering methods cannot capture cycles.

\subsubsection{Nogood Guards on Edges}

Nogood guards on edges are used for filtering out candidate edges and reduces the number of local candidate vertices as shown in \cref{def:local-candidate-vertex-set}.
  This is beneficial because we can detect the no-candidate conflict in earlier backtracking steps if all the local candidates are filtered out.
  In particular, as mentioned in \cref{sec:introduction}, the cycles in a query graph must be mapped to cycles in a data graph to satisfy the adjacency constraint, although cycles are difficult to find because of the sparseness of graphs.
  Such a search tends to involve many no-candidate conflicts, and thus by detecting them earlier we can improve the search performance.

As defined in \cref{def:edge-nogood}, $\mathit{NE}((u_i, v),\allowbreak (u_j, v'))$ is a set of assignments such that $\mathit{NE}((u_i, v),\allowbreak (u_j, v')) \subseteq V_Q[\lcolon i] \times V_G$ holds and $D$ is a nogood, where $D = \mathit{NE}((u_i, v),\allowbreak (u_j, v')) \cup \{ (u_i, v),\allowbreak (u_j, v') \}$.
This definition prohibits that $D$ contains an assignment of any $u_k$ such that $i < k < j$.
A nogood discovered with a deadend mask may violate it, and hence we need another rule to discover a nogood for a nogood guard on edges.

For conciseness of the discussion, we relax the format of the nogood as follows.
Assume that $M$ is a partial embedding whose length is $i + 1$ (i.e., $M$ includes an assignment of $u_i$).
Then, our goal is to find mask $K \subseteq V_Q$ such that $M[K] \cup \{(u_j, v')\}$ is a nogood.
  It allows us to discuss this problem for an arbitrary combination of partial embedding $M$ and candidate vertex $(u_j, v')$ regardless of the existence of candidate edge $((u_i, M(u_i)), (u_j, v'))$.
We formally define such mask $K$.

\begin{definition}[Fixed deadend mask]
  \label{def:fixed-deadend-mask}
  Let $M$ be an extension, $k$ be the length of $M$, and $(u_i, v)$ be a candidate vertex.
  In addition, for any $v'$, suppose that $K_{v'}$ is the $(u_i, v)$-fixed deadend mask of $M \oplus v'$.
  The \emph{$(u_i, v)$-fixed deadend mask} of $M$ is given by $K$ defined as follows.
  The cases are listed in order of priority.
  
  \begin{enumerate}
    \item If $i < k$ holds, $K = K' \setminus \{ u_i \}$ where $K'$ is the deadend mask of $M[\lcolon i] \oplus v$.
    \item If some full embedding includes $M \cup \{(u_i, v)\}$, $K = \emptyset.$
    \item If $M$ has a conflict, $K$ is the conflict mask of $M$.
    \item If $M$ holds $v \not\in N(M(u_j))$ for some $u_j \in N(u_i)$, $K = \{u_j\}$\footnote{If multiple $u_j$ holds the condition, $u_j$ of the smallest $j$ is chosen.}.
    \item If $M$ matches $\mathit{NE}((u_j, v'), (u_i, v))$ for some $(u_j, v') \in M$, $K = \mathrm{dom}(\mathit{NE}((u_j, v'), (u_i, v))) \cup \{u_j\}.$\footnotemark[1]
    \item If some $v' \in C(u_k; M)$ holds $u_k \not\in K_{v'}$, $K = K_{v'}$.
    \item Otherwise, $K = \bigcup_{v' \in C(u_k; M)} K_{v'} \cup B(u_k; M) \setminus \{ u_k \}$.
  \end{enumerate}
\end{definition}

The definition of the $(u_i, v)$-fixed deadend mask resembles that of the deadend mask.
The main differences are that (i) $K_{v'}$ is recursively given by $(u_i, v)$-fixed deadend mask, and (ii) it has conditions on $u_i$ and $v$ (cases (1), (4), and (5)).
Case (1) is the base case defined using the deadend mask.
Case (4) and (5) handle the case that $v$ is not a local candidate vertex of $u_i$.

\begin{example}
  \label{ex:fixed-deadend-mask}
  \revision{R1}{O1}{example-fixed-deadend-mask}{
  Let $M = \{ (u_0, v_0), (u_1, v_2), (u_2, v_7) \}$ and consider the $(u_4, v_0)$- and $(u_4, v_1)$-fixed deadend masks of $M$.
  Since $M \oplus v_{10} \oplus v_0$ has the injectivity conflict, its $(u_4, v_0)$-fixed deadend mask is $\{u_0, u_4\}$ by case (3) of \cref{def:fixed-deadend-mask}.
  Then, $(u_4, v_0)$-fixed deadend mask of $M \oplus v_{10}$ is $\{u_0, u_4\} \cup B(u_4; M \oplus v_{10}) \setminus \{u_4\} = \{u_0, u_2, u_3\}$ by case (7).
  It follows that $(u_4, v_0)$-fixed deadend mask of $M$ is $\{u_0, u_2, u_3\} \cup B(u_3; M) \setminus \{u_3\} = \{u_0, u_2\}$.
  On the other hand, the $(u_4, v_1)$-fixed deadend mask of $M \oplus v_{10}$ is $\{u_3\}$ by case (4) because $v_{10} \not\in v_{10}$.
  Hence, by case (7), $(u_4, v_1)$-fixed deadend mask of $M$ is $\{u_3\} \cup B(u_3; M) \setminus \{u_3\} = \{ u_2 \}$.
  } 
\end{example}

\begin{lemma}
  \label{lem:nogood-of-fixed-deadends}
  Let $M$ be an extension, $(u_i, v)$ be a candidate vertex, and $K$ be the $(u_i, v)$-fixed deadend mask of $M$.
  Suppose that $|M| \le i$ and $M \cup \{(u_i, v)\}$ is a nogood.
  Then, $M[K] \cup \{(u_i, v)\}$ is a nogood.
\end{lemma}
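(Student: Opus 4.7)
My plan is to prove the statement by structural case analysis on \cref{def:fixed-deadend-mask}, combined with strong induction on $|V_Q| - |M|$. The recursive clauses~(6) and~(7) reference $K_{v'}$ for the longer extension $M \oplus v'$, so the induction measure strictly decreases. Case~(2) contradicts the hypothesis that $M \cup \{(u_i, v)\}$ is a nogood and can be discarded immediately. Cases~(3)--(5) require only direct unpacking: Case~(3) invokes \cref{lem:nogood-of-conflicting-extensions} on the conflict mask to conclude that $M[K]$ is already a nogood, whose supersets remain nogoods; Case~(4) violates the adjacency constraint because $u_j \in N(u_i)$ yet $v \notin N(M(u_j))$, so $\{(u_j, M(u_j)), (u_i, v)\}$ cannot sit inside any full embedding; Case~(5) reduces to \cref{def:edge-nogood} after noting that $M[K]$ absorbs $\mathit{NE}((u_j, v'), (u_i, v)) \cup \{(u_j, v')\}$ by the matching condition.

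For Case~(1), the key observation is that, under $|M| \le i$, this branch only fires inside the recursion at the boundary $|M| = i$, when evaluating $K_{v'}$ for $M \oplus v'$ of length $i+1$. There $(M \oplus v')[\lcolon i] \oplus v$ collapses to $M \cup \{(u_i, v)\}$, which is a deadend extension of $M$ because the hypothesis makes it a nogood and therefore precludes every extension into a full embedding. Applying \cref{lem:nogood-of-deadends} yields that $(M \cup \{(u_i, v)\})[K']$ is a nogood, and a short case split on whether $u_i \in K'$ shows that $M[K_{v'}] \cup \{(u_i, v)\}$ either equals or contains this nogood, so it is itself a nogood.

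Cases~(6) and~(7) are the inductive cases, and Case~(7) is the main obstacle. Case~(6) is straightforward: the inductive hypothesis (or the Case~(1) argument above at the boundary) applied to $M \oplus v'$ provides a nogood $(M \oplus v')[K_{v'}] \cup \{(u_i, v)\}$, and the side condition $u_k \notin K_{v'}$ lets us rewrite the left-hand side as $M[K] \cup \{(u_i, v)\}$. For Case~(7) I mirror the no-candidate branch in the proof of \cref{lem:nogood-of-deadends}. Arguing by contradiction, suppose a full embedding $\hat{M}$ contains $M[K] \cup \{(u_i, v)\}$ and set $\hat{v} := \hat{M}(u_k)$. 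The inclusion $B(u_k; M) \subseteq K$ together with \cref{lem:bounding-set} forces $\hat{v} \in C(u_k; \hat{M}) \subseteq C(u_k; M)$. Because Case~(6) did not trigger, $u_k \in K_{\hat{v}}$, and the induction (or the Case~(1) analysis at the boundary $|M| = i$) applied to $M \oplus \hat{v}$ yields that $M[K_{\hat{v}} \setminus \{u_k\}] \cup \{(u_k, \hat{v}), (u_i, v)\}$ is a nogood. Since $K_{\hat{v}} \setminus \{u_k\} \subseteq K$ and $(u_k, \hat{v}), (u_i, v) \in \hat{M}$, this nogood sits inside $\hat{M}$, delivering the required contradiction.
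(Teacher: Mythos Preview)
Your proof is correct and follows essentially the same approach as the paper's: a case analysis on \cref{def:fixed-deadend-mask}, with cases~(6) and~(7) handled inductively by mirroring the proof of \cref{lem:nogood-of-deadends}, and cases~(1), (4), (5) verified directly. Your treatment of case~(1)---observing that under the hypothesis $|M|\le i$ it can only fire for $M\oplus v'$ at the boundary $|M|=i$ inside the recursion, and then reducing to \cref{lem:nogood-of-deadends} applied to $M\cup\{(u_i,v)\}$---is in fact more carefully stated than the paper's terse version.
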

\begin{proof}
  We make a proof by cases for cases (1), (4), and (5) in \cref{def:fixed-deadend-mask} and omit the others because they can be proved similarly to the proof of \cref{lem:nogood-of-deadends}.
  Regarding \textbf{case (1)},
    let $K'$ be the deadend mask of $M[\lcolon i] \oplus v$.
    We have $(M[\lcolon i] \oplus v)[K'] = M[\lcolon i][K' \setminus \{u_i\}] \cup \{ (u_i, v) \} = M[K] \cup \{(u_i, v)\}$ because $M[\lcolon i] = M$ holds by hypothesis $|M| \le i$.
    Hence, $M[K] \cup \{(u_i, v)\}$ is a nogood.
  Regarding \textbf{case (4)},
    we have $M[K] \cup \{(u_i, v)\} = \{(u_j, M(u_j), (u_i, v)\}$.
    By hypothesis, $(u_j, u_i) \in E_Q$ and $(M(u_j), v) \not\in E_G$ hold, and thus $M[K] \cup \{(u_i, v)\}$ is a nogood because of the violation of the adjacency constraint.
  Regarding \textbf{case (5)},
    Let $M'$ be an arbitrary partial embedding such that $M[K] \cup \{(u_i, v)\} \subseteq M'$.
    Since $\mathrm{dom}(\mathit{NE}((u_j, M(u_j)),\allowbreak (u_i, v))) \subseteq K$ holds, we have $\mathit{NE}((u_j, M(u_j)),\allowbreak (u_i, v)) \subseteq M'$, and thus $M'[\lcolon j] \cup \{(u_j, M(u_j)),\allowbreak (u_i, v)\}$ is a nogood by \cref{def:edge-nogood}
    From $u_j \in K$, we have $M'(u_j) = M(u_j)$.
    Hence, $M'[\lcolon j] \cup \{(u_j, M(u_j)),\allowbreak (u_i, v)\} \subseteq M'$ holds, which means $M'$ is a deadend.
    It follows that $M[K] \cup \{(u_i, v)\}$ is a nogood.
  We have proved the statement.
\end{proof}

During backtracking, GuP updates a nogood guard on edges as follows.
Suppose that $M$ is a partial embedding of length $i$, $C(u_i; M)$ contains $v$, and there exists candidate edge $((u_i, v), (u_j, v'))$.
If extension $M \oplus v$ did not yield a full embedding containing $(u_j, v')$ in the subsequent recursions, GuP computes the $(u_j, v')$-fixed deadend mask $K$ of $M \oplus v$.
Then, GuP records $M[K]$ in $\mathit{NE}((u_i, v), (u_j, v'))$.
This holds \cref{def:edge-nogood} because $M[K] \cup \{(u_i, v), (u_j, v')\} = (M \oplus v)[K] \cup \{(u_j, v')\}$, which is a nogood by \cref{lem:nogood-of-fixed-deadends}.

\begin{example}
  \label{ex:edge-nogoo-guard}
  \revision{R1}{O1}{example-edge-nogood}{
  Let $M = \{ (u_0, v_0), (u_1, v_2) \}$.
  Since $(u_4, v_0)$-fixed deadend mask of $M \oplus v_7$ is $\{u_0, u_2\}$ (\cref{ex:fixed-deadend-mask}), GuP records $M[\{u_0, u_2\}]$ ($= \{(u_0, v_0)\}$) in $\mathit{NE}((u_2, v_7), (u_4, v_0))$.
  In addition, since $(u_4, v_1)$-fixed deadend mask of $M \oplus v_7$ is $\{u_2\}$, GuP records $M[\{u_2\}]$ ($= \emptyset$) in $\mathit{NE}((u_2, v_7), (u_4, v_1))$, which filters out $((u_2, v_7), (u_4, v_1))$ for all partial embeddings.
  }
\end{example}

Since the generation of a nogood guard incurs slight overheads, we optimized our implementation by utilizing nogood guards on edges only in the 2-core of a query graph.
As mentioned above, a nogood guard on edges is effective for pruning in the search of cyclic structures.
The subgraph outside of the 2-core consists of trees, and thus we did not generate the nogood guards on candidate edges that correspond to query edges outside of the 2-core.
This optimization allows GuP to profit from pruning opportunities offered by nogood guards on edges without sacrificing the efficiency.

\subsection{Backtracking with Guards}
\label{sec:backtracking}

\begin{algorithm}[!t]
  \small
  \caption{$\textproc{Backtrack}$ function of GuP}
  \begin{algorithmic}[1]
    \Require Partial embedding $M$ and sets of local candidate vertices $C_M$.
    \Ensure All the embeddings of $Q$ in $G$
    \IfThen{$|M| = |V_Q|$}{output $M$}
    \State $k \gets |M|$
    \For{$v \in C_M(u_k)$} \label{line:local-cand-loop}
      \IfThen{$M(u_i) = v$ for some $u_i$}{\textbf{continue}} \label{ln:outline-validation}
      \IfThen{$M$ matches $R(u_k, v)$ or $\mathit{NV}(u_k, v)$}{\textbf{continue}} \label{ln:outline-guard}
      \State Copy $C_M$ to $C'_M$
      \For{$u_i \in N_+(u_k)$}
        \State $C'_M(u_i) \gets \{ v' \in C_M(u_i) \cap N(v)$\label{ln:outline-refinement}
        \Statex \hspace{7em}$\mid M \text{ does not match } \mathit{NE}((u_k, v), (u_i, v'))\}$
      \EndFor
      \If{$C'_M(u_i) \neq \emptyset$ for all $u_i$} \label{ln:outline-nocandidate}
        \State $\textproc{Backtrack}(M \oplus v, C'_M)$ \label{ln:outline-recursion}
        \State Update $\mathit{NE}$ for each candidate edge incident to $(u_k, v)$
      \EndIf
      \If{$M \oplus v$ is found to be a deadend}
        \State Discover a nogood in $M \oplus v$ as $D$ and update $\mathit{NV}$
        \IfThen{$D \subseteq M$}{\Return} \Comment{Backjumping}\label{ln:backjumping}
      \EndIf
    \EndFor
  \end{algorithmic}
  \label{alg:backtrack}
\end{algorithm}

By utilizing a reservation guard and a nogood guard, GuP efficiently performs backtracking.
\cref{alg:backtrack} shows the backtracking algorithm of GuP.
Calling $\textproc{Backtrack}(\emptyset, C)$ starts the search, where $C$ is sets of the candidate vertices.
Function $\textproc{Backtrack}(M, C_M)$ recursively extends partial embedding $M$ until it obtains a full embedding.
  $C_M$ is sets of local candidate vertices under $M$ (i.e., $C_M(u_i) = C(u_i; M)$).
  If $M$ is not a full embedding, \textproc{Backtrack} tries extending $M$ with $v \in C_M(u_k)$.
  It first checks if $v$ is already assigned in $M$, or $M$ matches $R(u_k, v)$ or $\mathit{NV}(u_k, v)$.
  If so, $v$ is filtered out.
  Next, $C_M$ is refined to $C'_M$, the sets of local candidate vertices under $M \oplus v$, by following \cref{def:local-candidate-vertex-set}. 
    The bounding sets of each query vertex are also incrementally computed similar to $C_M$, while it is not shown in the pseudocode.
  If every query vertex retains at least one local candidate vertex after the refinement, \textproc{Backtrack} is recursively called.
  After the recursion, nogood guards are updated.
  In addition, discovered nogood $D$ is used for backjumping; specifically, $M \oplus v$ for any $v$ becomes a deadend if $M$ includes nogood $D$, and hence the iteration over $v \in C_M(u_k)$ is terminated.
Guards improve the search efficiency by filtering out unnecessary candidate vertices in $C_M(u_k)$ at line \ref{ln:outline-guard} and in $C'_M(u_i)$ for $u_i \in N_+(u_k)$ at line \ref{ln:outline-refinement}.

\begin{example}
  \label{ex:backtracking}
  \revision{R1}{O1}{example-backtracking}{
  The process of the backtracking can be considered as a depth-first search on a \emph{search tree} whose node corresponds to a recursive call with an extension.
  \cref{fig:overall-example-search-tree} shows the search tree of conventional backtracking.
  X-marks indicates a conflicting assignment.
  \cref{fig:overall-example-gcs} shows all the reservation guards and the nogood guards on vertices when the backtracking search reaches search node $m_6$, which corresponds to $M_6 = \{ (u_0, v_0), (u_1, v_3) \}$.
  Our backtracking algorithm now tries extending $M_6$ with each $v \in C(u_2; M_6)$ ($= \{ v_5, v_6, v_7 \}$), but all of them are filtered out by $R(u_2, v_5)$, $\mathit{NV}(u_2, v_6)$, and $\mathit{NV}(u_2, v_7)$, respectively.
  Hence, the function returns to node $m_1$, which corresponds to $M_1 = \{(u_0, v_0)\}$.
  Since $M_1 \oplus v_3$ ($= M_6$) was found to be a deadend, GuP computes deadend mask $K$ of $M_6$.
  From $K_{v_{5}} \cup K_{v_{6}} \cup K_{v_{7}} = \{u_0, u_2\}$ and $B(u_2; M_6) = \{u_0\}$ (\cref{ex:bounding-set}), $K = \{u_0, u_2\} \cup \{u_0\} \setminus \{u_2\} = \{u_0\}$ by case (4) of \cref{def:deadend-mask}.
  Thus, GuP discover nogood $D = M_6[K] = \{(u_0, v_0)\}$.
  Since $D \subseteq M_1$ holds, the function performs backjumping to the caller (line \ref{ln:backjumping}).
  This prunes search node $m_{12}$.
  As a whole, our approach prunes the shadowed nodes in \cref{fig:overall-example-search-tree}.
  }
\end{example}

\paragraph{Comparison with Failing Set-based Pruning}

\revision{R1}{D1}{method}{
Failing set-based pruning proposed in DAF \cite{Han2019} is one of the backjumping \cite{Rossi2006} methods and is popular in the database community \cite{Sun2020sigmod, Sun2021, Kim2022}.
\revisiontagx{R3}{O1}{failingset-method}
Although both DAF and GuP exploit nogoods for pruning, GuP is more effective for two reasons.
  First, GuP reuses a discovered nogood for pruning multiple times, whereas DAF discards a nogood after using it for backjumping.
  Note that GuP also performs backjumping (line \ref{ln:backjumping} in \cref{alg:backtrack}).
  Second, GuP discovers smaller nogoods, which offer higher pruning power.
    Like a deadend mask, DAF discovers a nogood using a failing set, which is defined as a set of query vertices and all their ancestors in terms of the matching order.
    Owing to the ancestors, a failing set tends to be large and so offers a large nogood.
    For example, a failing set of $M_6 = \{(u_0, v_0), (u_1, v_3)\}$ is $\{u_0, u_1\}$, and this fails to trigger a backjumping at search node $m_1$.
    On the other hand, GuP produces small deadend mask $\{u_0\}$ and can prune search node $m_{12}$ by the backjumping (\cref{ex:backtracking}).
    Thus, our nogood discovery rule enables more effective pruning.
}

\subsection{Optimizations}
\label{sec:optimizations}

In this section, we present additional techniques for improving the performance of subgraph matching.

\subsubsection{Search-node Encoding}
\label{sec:search-node-encoding}

The matching test of nogood guards takes nonnegligible computational costs.
Consider the matching test between partial embedding $M$ and $\mathit{NV}(u_i, v)$.
It takes $O(|\mathit{NV}(u_i, v)|)$ time to check $M(u_j) = v'$ for each $(u_j, v') \in \mathit{NV}(u_i, v)$.
This is the same for a nogood guard on edges.
The size of a nogood guard can be up to $|V_Q| - 1$, and GuP performs the matching test many times for filtering out candidate vertices and edges.
Thus, this overhead may spoil the performance benefit resulting from pruning.

To mitigate the overhead, we introduce a search-node encoding, which represent a nogood with a node in the search tree.
We assume that every node in the search tree has a unique ID number.
The search tree has a one-to-one correspondence between the nodes and the partial embeddings.
We refer to the node corresponding to partial embedding $M$ by the \emph{search node} of $M$.
Suppose that $m_i$ and $m_j$ are the search node of partial embeddings $M$ and $M'$.
Then, if $M'$ is an extension of $M$ (i.e., $M \subseteq M'$), $m_j$ is a descendant of $m_i$ in the search tree.
It can be checked in $O(1)$ time by maintaining the \emph{ancestor array} of $m_j$, denoted by $\mathit{anc}$.
Assuming that there exists an imaginary root node $m_0$, which corresponds to the empty partial embedding, $\mathit{anc}$ contains $m_0$ at $\mathit{anc}(0)$, a length-1 partial embedding at $\mathit{anc}(1)$, its child at $\mathit{anc}(2)$, and so on.
$\mathit{anc}(|M'|)$ is set to $m_j$.
Here, if $\mathit{anc}(|M|) = m_i$ holds, $m_j$ is a descendant of $m_i$.
This also means we can check if $M$ is a subset of $M'$ in $O(1)$ time.

\begin{example}
On the search tree shown in \cref{fig:overall-example-search-tree}, let us check if $M_3$ is a subset of $M_5$ where $M_3 = \{(u_0, v_0), (u_1, v_2), (u_2, v_6)\}$ and $M_5 = \{(u_0, v_0), (u_1, v_2), (u_2, v_7), (u_3, v_{10})\}$.
In the search tree, node $m_3$ and $m_5$ correspond to $M_3$ and $M_5$, respectively.
Ancestor array $\mathit{anc}$ of $m_5$ contains the IDs of $m_0$ (imaginary root node), $m_1$, $m_2$, $m_4$, $m_5$ in $\mathit{anc}(0)$ to $\mathit{anc}(4)$.
Since $\mathit{anc}(|M_3|) = \mathit{anc}(3) = m_4$ and thus $\mathit{anc}(|M_3|) \neq m_3$, we can find that $M_3$ is not a subset of $M_5$.
\end{example}

For applying this idea to matching tests with nogood guards, GuP encodes nogood guards into the ID of a search node.
Since a nogood guard is a subset of a partial embedding, it may not have a corresponding search node.
Thus, GuP ``rounds up'' a nogood guard to the minimum partial embedding including it.

\begin{definition}[Minimum superset embedding]
Let $M$ be a partial embedding and $D$ be a subset of $M$.
The \emph{minimum superset embedding} of $D$ in $M$ is $M[\lcolon i + 1]$ where $i$ is the query-vertex ID of the last assignment in $D$ (i.e., $D = \{\ldots, (u_i, v) \}$).
\end{definition}

In the definition above, letting $\mathit{anc}$ be the ancestor array of $M$, we can obtain the search node of minimum superset embedding $M[\lcolon i + 1]$ as $\mathit{anc}(i)$.
With rounding up to minimum superset embeddings, nogood guards can be encoded to the ID of the search node.

In summary, GuP stores nogood guards in a GCS as follows.
Suppose that nogood guard $D$ is extracted from partial embedding $M$, $L$ is the minimum superset embedding of $D$ in $M$, and $\mathit{anc}$ is the ancestor array of $M$.
Let $\mathit{id}$ be $\mathit{anc}(|L|)$, $\mathit{len}$ be $|L|$, and $K$ be $\mathrm{dom}(D)$.
Then, each nogood guard (both on vertices and edges) is stored as a triplet $(\mathit{id}, \mathit{len}, K)$.
Matching with partial embedding $M'$ is checked by $\mathit{anc}'(\mathit{len}) = \mathit{id}$ where $\mathit{anc}'$ is the ancestor array of $M'$.
$K$ is used to obtain $\mathrm{dom}(D)$ in the computation of bounding sets and the nogood discovery for the nogood-guard conflict case.
Thanks to the lightweight matching test with search-node encoding, GuP can efficiently filter out candidate vertices and edges.

\revisiontext{
\subsubsection{Parallelization}

\revisiontag{R3}{O3}{method}
Modern computers have multiple CPU cores and require parallel processing to utilize them.
We can easily parallelize backtracking of GuP, which tends to dominate query processing time, by searching different subtrees of the search tree in different threads.
Since the size of the search space is unknown in advance and usually very skewed, it is necessary to employ a work-stealing approach that dynamically splits the search tree and assigns it to an idle thread for load balancing.
Threads share the candidate vertices and edges and the reservation guards in a GCS but maintain thread-local nogood guards because those are modified during parallel backtracking.
Since the pruning efficiency may degrade because of not sharing information of nogoods between threads, we empirically evaluate it in \cref{sec:eval-parallel}.
}

\revisiontext{
\subsection{Complexity Analysis}
\label{sec:complexity-analysis}

\revisiontagx{R2}{O2}{complexity}
We first analyze the time complexity of each of three steps listed in \cref{sec:overview}.
\revisiontagx{R3}{O1}{complexity}, and then discuss the space complexity of the whole of GuP.
The following analyses assume that a bit vector of length $|V_Q|$ takes $O(1)$ space and $O(1)$ time for set operations, such as union and intersection, since a query graph is supposed to be small.

\textit{Time complexity of the GCS construction.}
GuP employs extended DAG-graph DP, which provides a candidate space through candidate filtering in $O(|E_Q| |E_G|)$ time \cite{Kim2022}.
Candidate filtering and GCS construction of GuP have the same complexity, $O(|E_Q| |E_G|)$, because we can obtain a GCS by attaching a null-valued guard to each candidate vertex and edge during extended DAG-graph DP.
GuP also adopts VC for optimizing the matching order, whose complexity is $O(|E_Q| |E_G|)$ \cite{Sun2020tkde}.
Therefore, the complexity of this step is $O(|E_Q| |E_G|)$.

\textit{Time complexity of the reservation guard generation.}
In the following, we show that \cref{alg:reservation-guard-generation} takes $O(|E_Q| |E_G|)$ time.
  Let $\bar{d}_Q$ and $\bar{d}_G$ be the average degrees of $Q$ and $G$, respectively.
  The loop over the candidate vertices (line \ref{ln:resv-gen-cand-loop}) iterates up to $|V_Q| |V_G|$ times, and the loop over forward neighbors of a query vertex (line \ref{ln:resv-gen-nbr-loop}) iterates $\bar{d}_Q$ times.
  The complexity for computing $E_R$ by \cref{eq:vertexcover-universe} (line \ref{ln:resv-gen-u}) is bounded by the size of $E_R$, which is $O(\sum_{v' \in N(v) \cap C(u_j)} |R(u_j, v')|) = O(|N(v)|\allowbreak \times r) = O(\bar{d}_G)$ since $r$ is a constant.
  After that, \cref{alg:reservation-guard-generation} solves the vertex-cover problem for graph $G_R = (V_R, E_R)$.
    $V_R$ consists of both endpoints of the edges, and thus $|V_R| \le 2 |E_R|$ holds.
    We employ the 2-approximation algorithm \cite{Cormen2009}, whose complexity is $O(|V_R| + |E_R|) = O(|E_R|) = O(|\bar{d}_G|)$.
  Therefore, the whole complexity of \cref{alg:reservation-guard-generation} is $O(|V_Q| |V_G| \bar{d}_Q \bar{d}_G) = O(|E_Q| |E_G|)$.
  
\textit{Time complexity of the backtracking search.}
Matching with a reservation guard takes $O(1)$ time because its size is bounded by $r$.
It also takes $O(1)$ time to perform matching with and generation of a nogood guard in search-node encoding as shown in \cref{sec:search-node-encoding}.
Hence, the complexity of backtracking is determined by the number of recursions.
While it is $O(\prod_{u_i} |C(u_i)|) = O(|V_G|^{|V_Q|})$ in the worst case \cite{Zhao2010}, the number significantly decreases in practice thanks to candidate filtering and guards.
However, theoretically analyzing their contribution is difficult because of their sensitivity to input graphs.
Thus, following previous studies \cite{Han2019, Sun2020sigmod, Sun2021, Kim2022}, we experimentally evaluate it in \cref{sec:evaluation}.

\textit{Space complexity.}
\revision{R2}{O4}{complexity}{
Since every part of GuP focuses only on candidate vertices and edges, a GCS dominates the space complexity of GuP.
A reservation guard consists of up to $r$ data vertices, and hence its size is regarded to be $O(1)$.
A nogood guard in search-node encoding also takes $O(1)$ space because it is a triplet of integers and a bit vector of query vertices, whose size is $O(1)$.
Therefore, the space complexity of a GCS is the same as a candidate space, which is $O(|E_Q| |E_G|)$ \cite{Han2019, Sun2020sigmod}.

}

\textit{Comparison with existing methods.}
\revision{R3}{O2}{complexity-comparison}{
If we leave out the exponential time complexity of backtracking, $O(|E_Q| |E_G|)$ is a common time and space complexity among recent methods \cite{Han2019, Bhattarai2019, Sun2021, Kim2022}.
GQL \cite{He2008} has an even higher time complexity due to semi-perfect matching \cite{Sun2020sigmod}.
However, the practical performance of subgraph matching largely depends on that of backtracking, and hence we experimentally show it in \cref{sec:evaluation}.
}
}

\section{Evaluation}
\label{sec:evaluation}

In this section, we compare the performance of GuP with existing methods and analyze GuP from various aspects.

\subsection{Experimental Setup}

\emph{Methods.}
We compared the performance of GuP with the following methods\footnote{We also tried to measure the performance of VEQ \cite{Kim2022}, but the binary obtained from \url{https://github.com/SNUCSE-CTA/VEQ} crashes during the process of over thousands of query graphs used in our experiment. Thus, we omitted its results for a fair comparison.}: DAF \cite{Han2019}, GQL-G \cite{Sun2020sigmod}, GQL-R \cite{Sun2020sigmod}, and RapidMatch (RM) \cite{Sun2021}.
All of them have been proposed in the last several years and employ failing set-based pruning.
GQL-G and GQL-R are combinations of candidate filtering of GraphQL \cite{He2008} and the matching orders of GraphQL and RI \cite{Bonnici2013}, respectively.
They performed the best in the evaluation by Sun et al.~\cite{Sun2020sigmod}.
Every implementation was obtained from the authors' GitHub repository\footnote{DAF: \url{https://github.com/SNUCSE-CTA/DAF}\\
GQL-G and GQL-R: \url{https://github.com/RapidsAtHKUST/SubgraphMatching}\\
RapidMatch: \url{https://github.com/RapidsAtHKUST/RapidMatch}}.
Our implementation of GuP\footnote{\url{https://github.com/araij/gup/}} employs candidate filtering with extended DAG-graph DP \cite{Kim2022} and the matching order produced by VC \cite{Sun2020tkde}.
We set $r$, the size limit of reservation guard, to $3$ unless otherwise specified.

\emph{Graphs.}
We used the following four data graphs: Yeast (3,112 vertices, 12,519 edges, 71 labels), Human (4,674 vertices, 86,282 edges, 44 labels), WordNet (76,853 vertices, 120,399 edges, 5 labels), and Patents (3,774,768 vertices, 16,518,947 edges, 20 labels).
The first three graphs are labeled real-world graphs and popular among studies of subgraph matching \cite{Han2019, Sun2020tkde, Sun2020sigmod, Sun2021, Kim2022}.
\revision{R2}{O4}{explanation}{
The last one, Patents, is the largest graph used in the recent studies \cite{Sun2020tkde, Sun2020sigmod}.
}
This is an unlabeled graph, and thus we gave the randomly-assigned labels used in the evaluation by Sun et al.~\cite{Sun2020sigmod}, which is publicly available\footnote{\url{https://github.com/RapidsAtHKUST/SubgraphMatching\#experiment-datasets}}.
We generated query graphs also in the same manner as Sun et al.; specifically, we performed a random walk on a data graph and extracted a subgraph induced by the visited vertices as a query graph.
A query graph is classified as a sparse query graph if its average degree is less than three; otherwise, it is classified as a dense query graph.
We generated query sets of sparse and dense query graphs by changing the number of vertices.
Query sets of sparse query graphs are 8S, 16S, 24S, and 32S, and those of dense query graphs are 8D, 16D, 24D, and 32D.
Thus, there are 32 query sets in total for four data graphs, four sizes, and two densities.
\revisionx{R1}{O2}{50k-text}{
Each query set contains 50,000 query graphs.
While it is popular to make a query set of 100 or 200 query graphs \cite{Bi2016, Han2019, Sun2020sigmod, Sun2020tkde, Sun2021, Kim2022}, it is too few considering that an $n$-vertex query graph has $(n - 1)!$ possible topologies and $|\Sigma|^n$ possible label assignments.
Although certain applications such as crime detection \cite{Michalak2011, Qiu2018} focus on subgraphs that rarely occur in a data graph, they tend not to be extracted as a query graph.
Thus, large query sets are necessary to extensively evaluate the efficiency of each method.
}

\emph{Machine and terminate conditions.}
We conducted the experiments on a machine with four Intel Xeon E7-8890 v3 processors (18 cores per socket, and thus 72 cores in total) and 2 TB of memory.
Except for the evaluation of parallelism (\cref{sec:eval-parallel}), all the methods were executed in a single thread using one physical core exclusively.
To reduce the experimental time, we used up to 70 cores to run 70 experiments simultaneously.
Similarly to the existing studies \cite{Bi2016, Han2019, Sun2020tkde, Sun2020sigmod, Kim2022}, we terminated the search for a query graph when $10^5$ embeddings were discovered.
We set a time limit for a query graph and a query set, respectively.
A search for a single query graph was terminated after one hour.
On the other hand, the query set was divided into subgroups of 100 query graphs, and when the total processing time of any subgroup exceeded three hours, the whole query set was judged as a ``did not finish'' (DNF).

\subsection{Comparison with Existing Methods}

We first focus on the distribution of the processing time of each query and then show the average time.
We consider the distribution more informative because the average is largely affected by the setting of the time limit; specifically, a short time limit hides the impact of expensive query graphs, and in contrast, a long time limit lets expensive query graphs dominate the result.

\subsubsection{Distribution of Processing Time}

\begin{table}[!t]
  \setlength\tabcolsep{0.4mm}
  \footnotesize
  \centering
  \caption{\revision{R1}{O3}{dnf}{Finished (i.e., non-DNF) query sets}}
  \vspace{-3mm}
  \label{tb:dnf-query-sets}
  \begin{tabular}{llcccccccclcccccccclcccccccclr}
    \bhline{1pt}
 & & \multicolumn{8}{c}{Human} & & \multicolumn{8}{c}{WordNet} & & \multicolumn{8}{c}{Patents} & & \multirow{2}{*}{\rotatebox[origin=r]{90}{Count}} \\ \cline{3-10}\cline{12-19}\cline{21-28}
      & & \rotatebox[origin=r]{90}{8S} & \rotatebox[origin=r]{90}{16S} & \rotatebox[origin=r]{90}{24S} & \rotatebox[origin=r]{90}{32S} & \rotatebox[origin=r]{90}{8D} & \rotatebox[origin=r]{90}{16D} & \rotatebox[origin=r]{90}{24D} & \rotatebox[origin=r]{90}{\ 32D} & & \rotatebox[origin=r]{90}{8S} & \rotatebox[origin=r]{90}{16S} & \rotatebox[origin=r]{90}{24S} & \rotatebox[origin=r]{90}{32S} & \rotatebox[origin=r]{90}{8D} & \rotatebox[origin=r]{90}{16D} & \rotatebox[origin=r]{90}{24D} & \rotatebox[origin=r]{90}{32D} & & \rotatebox[origin=r]{90}{8S} & \rotatebox[origin=r]{90}{16S} & \rotatebox[origin=r]{90}{24S} & \rotatebox[origin=r]{90}{32S} & \rotatebox[origin=r]{90}{8D} & \rotatebox[origin=r]{90}{16D} & \rotatebox[origin=r]{90}{24D} & \rotatebox[origin=r]{90}{32D} & & \\
    \hline
\rule{0em}{2.6ex}GuP   & & \checkmark & \checkmark & \checkmark & \checkmark & \checkmark & \checkmark &            &            & & \checkmark & \checkmark & \checkmark & \checkmark & \checkmark & \checkmark &            &            & & \checkmark & \checkmark & \checkmark & \checkmark & \checkmark & \checkmark & \checkmark & \checkmark & & 20 \\
DAF   & & \checkmark &            &            &            & \checkmark &            &            &            & & \checkmark &            &            &            &            &            &            &            & & \checkmark & \checkmark &            &            & \checkmark & \checkmark & \checkmark &            & &  8 \\
GQL-G & & \checkmark & \checkmark &            & \checkmark & \checkmark & \checkmark &            &            & & \checkmark & \checkmark & \checkmark &            & \checkmark & \checkmark &            &            & & \checkmark & \checkmark & \checkmark & \checkmark & \checkmark & \checkmark & \checkmark &            & & 17 \\
GQL-R & & \checkmark & \checkmark &            &            & \checkmark &            &            &            & & \checkmark & \checkmark & \checkmark & \checkmark & \checkmark & \checkmark &            &            & & \checkmark & \checkmark & \checkmark & \checkmark & \checkmark & \checkmark & \checkmark &            & & 16 \\
RM    & & \checkmark &            &            &            & \checkmark &            &            &            & & \checkmark & \checkmark & \checkmark &            & \checkmark & \checkmark &            &            & & \checkmark & \checkmark & \checkmark & \checkmark & \checkmark & \checkmark & \checkmark &            & & 14 \\
    \hline
  \end{tabular}
\end{table}

\begin{figure}[!t]
  \centering
  \includegraphics[width=1.0\linewidth]{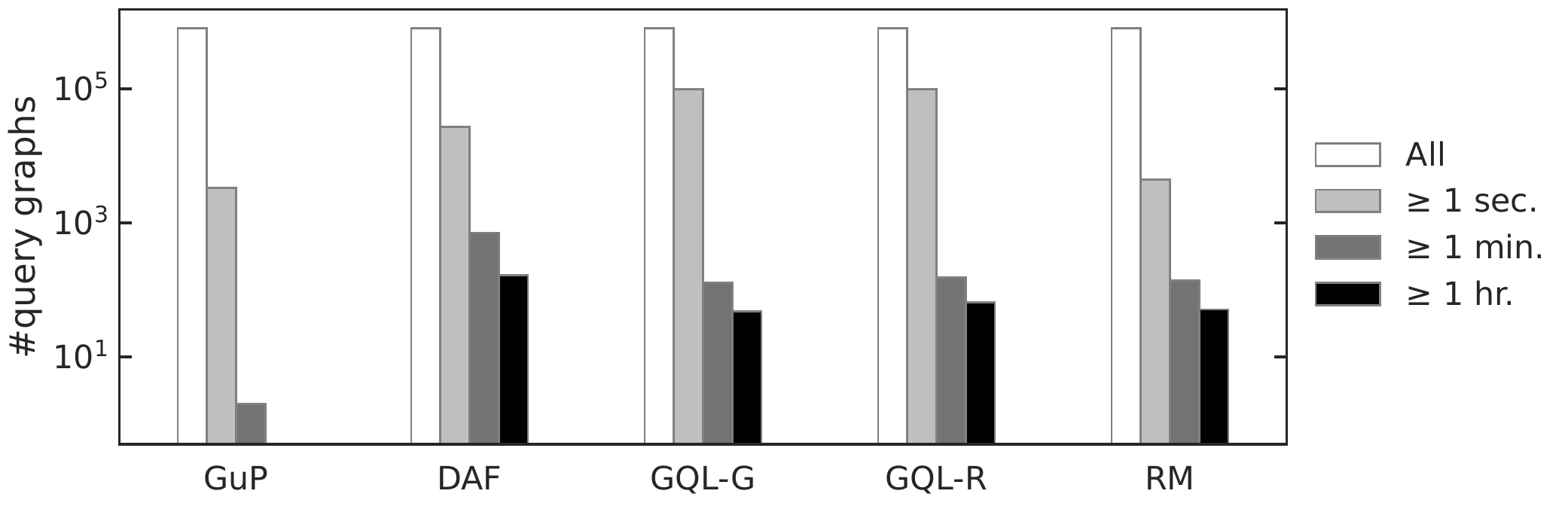}
  \vspace{-5mm}
  \caption{\revision{R1}{O2}{timeout-result}{Total number of query graphs in each processing time range.}\revisiontag{R1}{O3}{whole}}
  \label{fig:result-time-dist-whole}
\end{figure}

\cref{tb:dnf-query-sets} shows query sets that each method finished, namely, processed all the query graphs avoiding a DNF.
GuP finished the most query sets.
In addition, GuP is the only method that could finish 24S of Human and 32D of Patents.
\revision{R1}{O2}{timeout-text}{
\cref{fig:result-time-dist-whole} shows the processing time distribution of query graphs.
To equalize the number of query graphs for all the methods, we focused on 16 query sets for which no method yielded a DNF.
The ``All'' bar indicates 800,000 ($50,000 \times 16$) query graphs in those query sets.
We counted the number of query graphs that took a processing time more than the following thresholds: one second (> 1 sec.), one minute (> 1 min.), and one hour (> 1 hr.).
}
Note that, since the time limit per query graph is set to one hour, all the query graphs that took more than an hour were terminated before their completion.
As shown in the figure, GuP yielded the fewest query graphs for all the thresholds.
Most notably, GuP has no query graphs that took more than an hour.
The overall results in \cref{tb:dnf-query-sets,fig:result-time-dist-whole} confirm the high robustness of GuP, which enables GuP to process query graphs in a practical time that the state-of-the-art methods cannot.

Next, we present the processing time distribution for query sets 16S, 32S, 16D, and 24D of each data graph.
Like \cref{fig:result-time-dist-whole}, \cref{fig:overall-result} shows bars of the number of query graphs that took more than a second, a minute, and an hour.
Instead of the ``All'' bars, the top of the Y axis is set to 50,000, the number of query graphs in each query set.
GQL-G and GQL-R are shown as ``G.-G'' and ``G.-R'' because of space limitation.
GuP showed shorter query processing time than the existing methods as a whole, and we can confirm the stable performance of GuP for various query graphs and data graphs.
In addition, GuP always yielded the fewest query graphs that took more than an hour, except for 16D of WordNet.
This proves that GuP can effectively reduce the search space of difficult queries.

\revision{R1}{O2}{50k-result}{
  \cref{fig:overall-result} shows that in many cases the number of query graphs that took over an hour was less than 100, which is 0.2\% of 50,000 query graphs in each query set.
  They would have not been found if each query set had consisted of 100 or 200 query graphs.
}

\begin{figure}[t]
  \begin{minipage}[b]{\linewidth}
    \centering
    \includegraphics[width=1.0\linewidth]{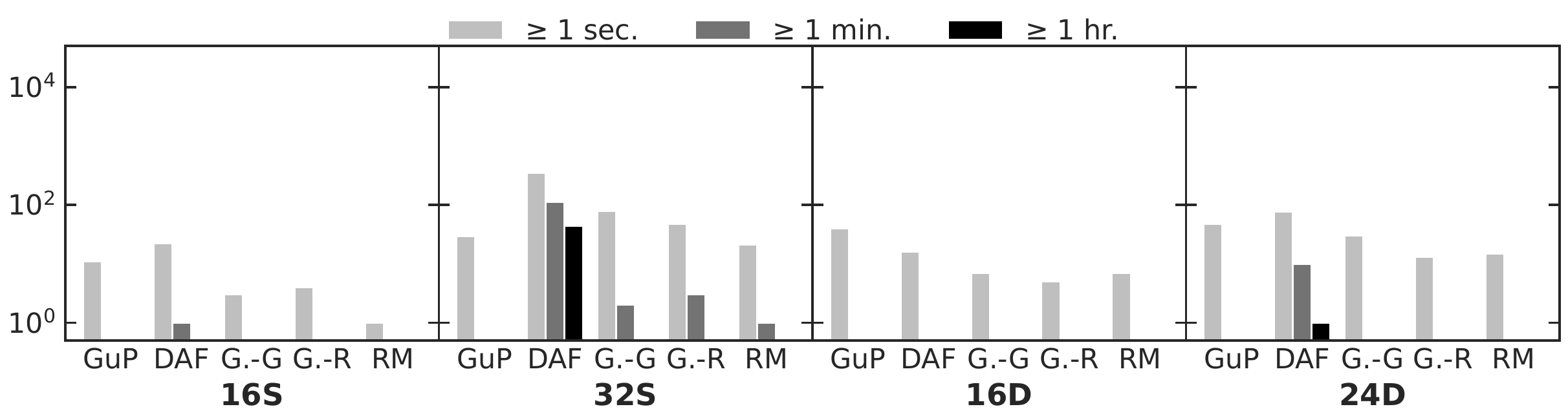}
    \vspace{-5ex}
    \subcaption{Yeast}
    \label{fig:overall-yeast-result}
  \end{minipage}
  \begin{minipage}[b]{\linewidth}
    \centering
    \includegraphics[width=1.0\linewidth]{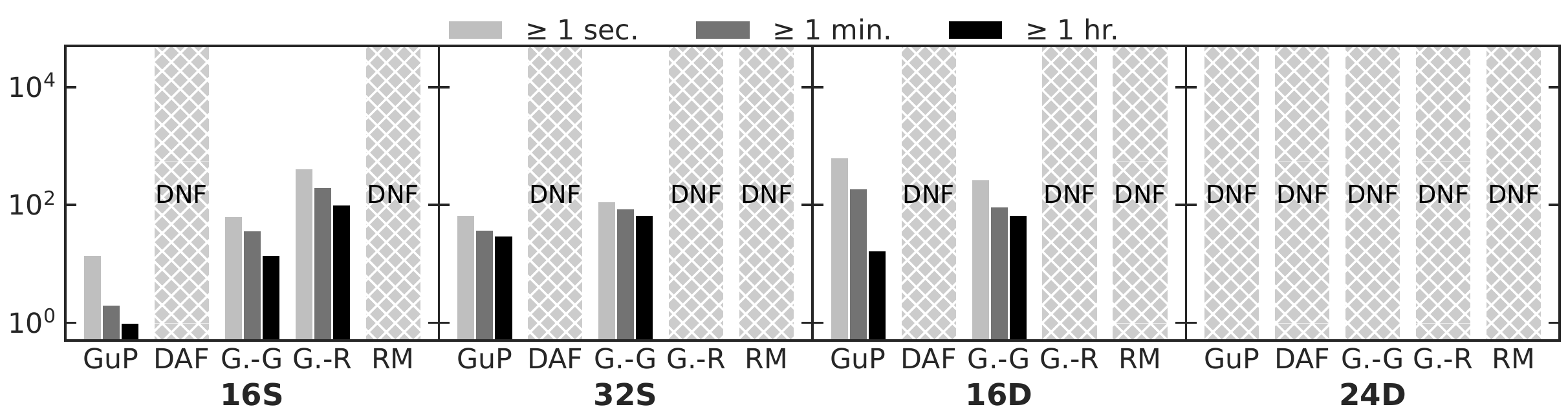}
    \vspace{-5ex}
    \subcaption{Human}
    \label{fig:overall-human-result}
  \end{minipage}
  \begin{minipage}[b]{\linewidth}
    \centering
    \includegraphics[width=1.0\linewidth]{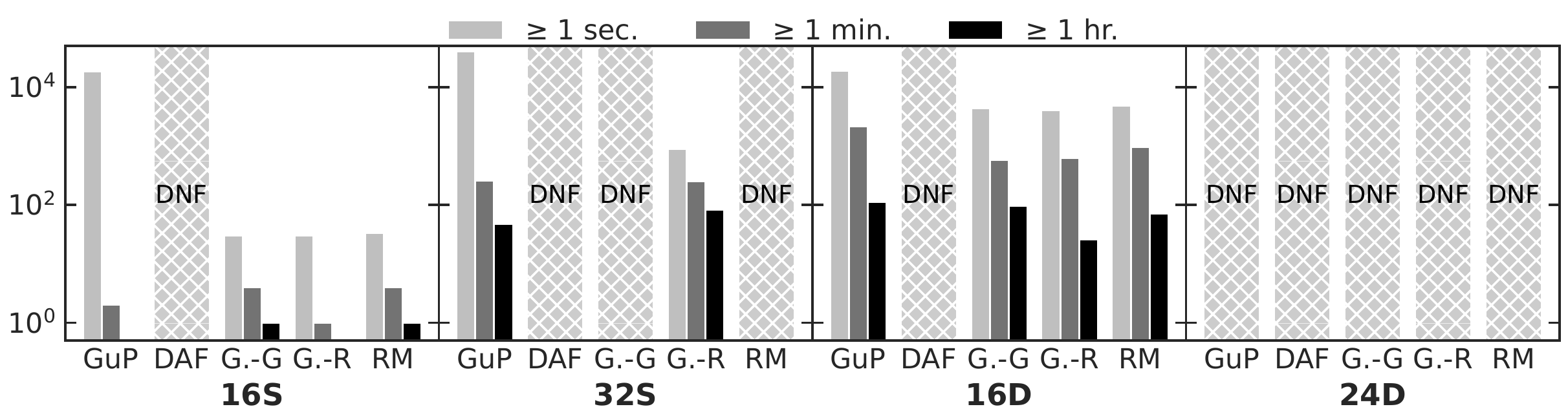}
    \vspace{-5ex}
    \subcaption{WordNet}
    \label{fig:overall-wordnet-result}
  \end{minipage}
  \begin{minipage}[b]{\linewidth}
    \centering
    \includegraphics[width=1.0\linewidth]{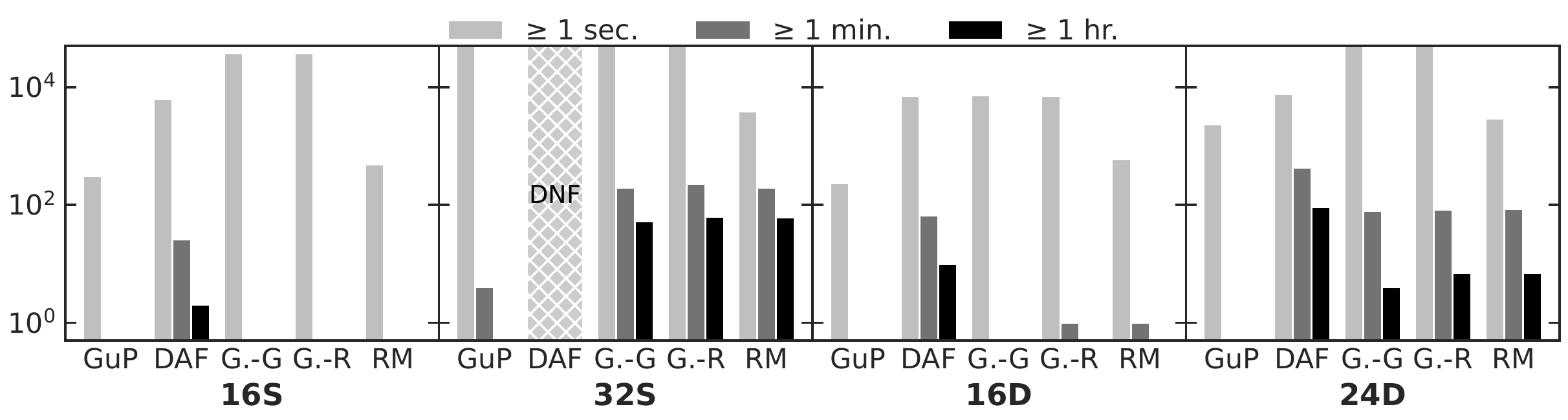}
    \vspace{-5ex}
    \subcaption{Patents}
    \label{fig:overall-patents-result}
  \end{minipage}
  \vspace{-6mm}
  \caption{\revision{R1}{O3}{breakdown}{Breakdown of the number of query graphs.}}
  \label{fig:overall-result}
\end{figure}

\subsubsection{Average Processing Time}

\begin{figure}[t]
  \includegraphics[width=\linewidth]{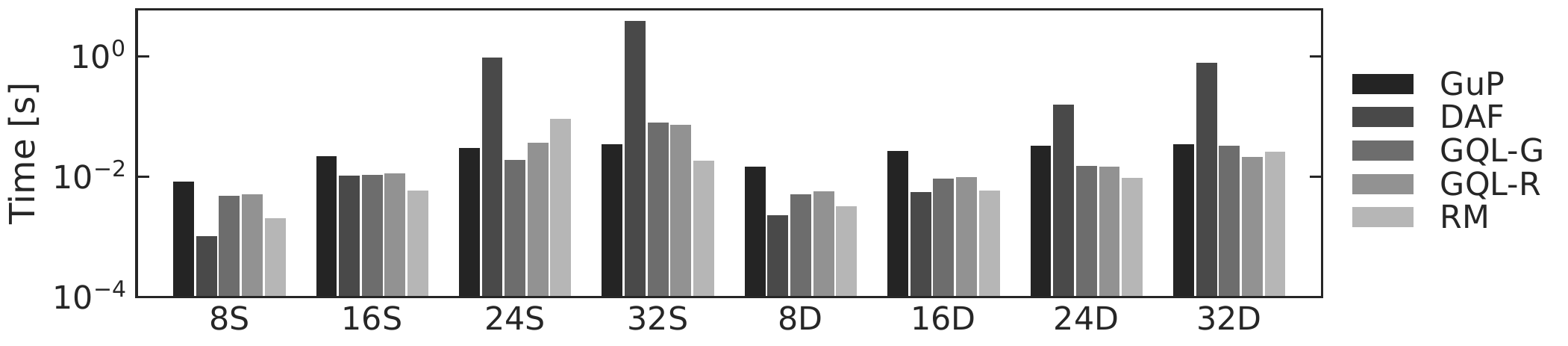}
  \vspace{-6mm}
  \caption{Average processing time for each query set of Yeast.}
  \label{fig:search-sec-yeast}
\end{figure}

\cref{fig:search-sec-yeast} presents the average processing time of each query graph in the query sets of Yeast.
The results for the other data graphs are omitted because of a lack of comparability caused by DNF query sets.
Timed-out query graphs are counted as if they were completed in one hour, which is the time limit per query graph.
\revision{R1}{O4}{}{
This figure reveals another aspect of the performance because \cref{fig:result-time-dist-whole,fig:overall-result} classify query graphs into the ranges of the processing time and do not care an actual value of the processing time in each range.
Since the generation of and the matching with guards involve additional overheads, GuP yielded only moderate performance for 8- and 16-vertex query graphs.
}
However, GuP became one of the best methods for 24- and 32-vertex query graphs because larger query graphs have larger search space, where the performance gain offered by guards more easily surpasses the overheads.
As we can confirm from the processing time distribution and the number of DNFs depicted in \cref{fig:overall-result}, the query graphs of the other data graphs are even more difficult to solve, and hence GuP tends to perform better than the other methods.

\revisiontext{
\subsubsection{Number of Recursions}

\begin{figure}[t]
  \begin{minipage}[b]{0.48\linewidth}
    \centering
    \includegraphics[width=1.0\linewidth]{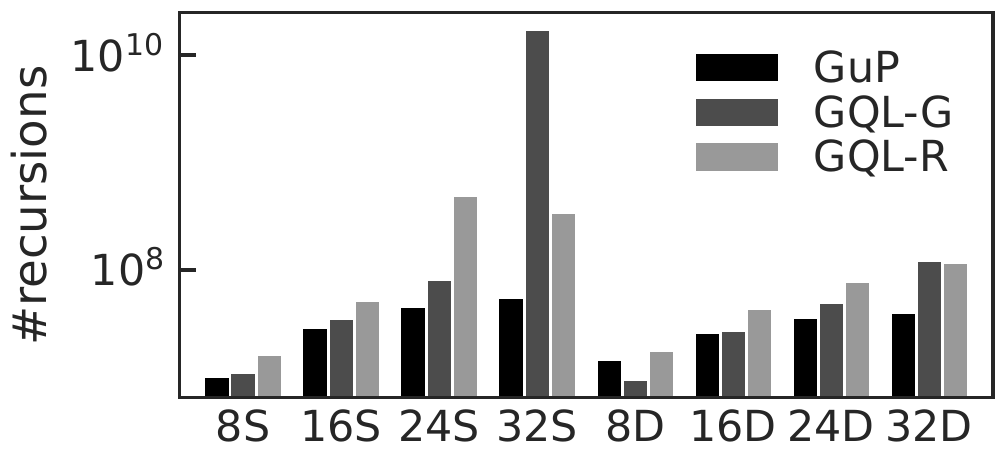}
    \vspace{-5mm}
    \caption{
      \revisiontext{
        Comparison of the number of
        \revisiontaghere{R2}{O3}{result}
        recursions.
        \revisiontaghere{R3}{O2}{result}
      }
    }
    \label{fig:rec-count-comparison}
  \end{minipage}
  \hspace{0.01\linewidth}
  \begin{minipage}[b]{0.48\linewidth}
    \centering
    \includegraphics[width=1.0\linewidth]{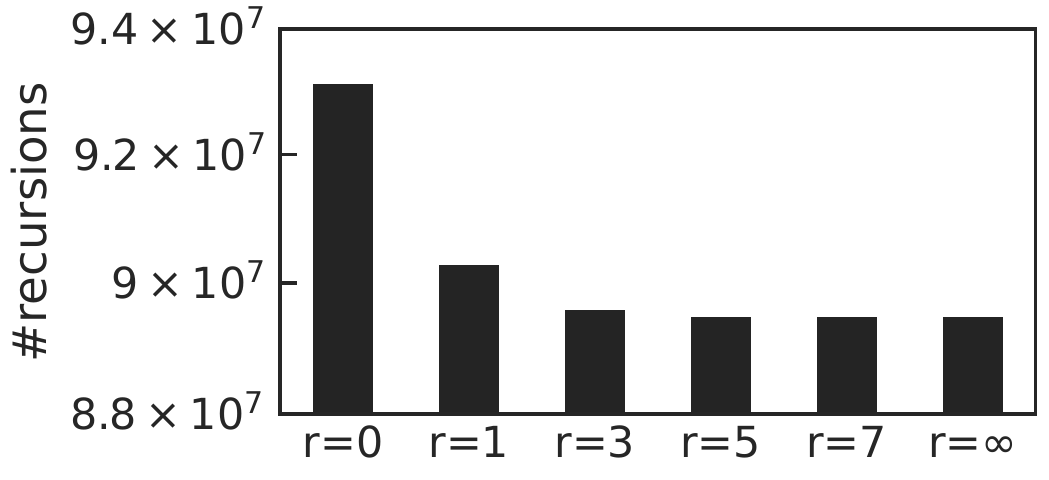}
    \vspace{-5mm}
    \caption{
      \revisiontext{ 
        Parameter search for reservation size $r$.
      }
    }
    \label{fig:resv-size-rec-count}
  \end{minipage}
\end{figure}

\revisiontag{R1}{D1}{result-text}
To evaluate the size of the search space, we compared the number of recursive calls of the backtracking function.
\revisiontag{R2}{O3}{text}
\cref{fig:rec-count-comparison} shows the total number of recursions needed to process each query set of Yeast.
\revisiontag{R3}{O2}{text}
We omitted DAF and RM because they do not count the recursions; DAF employs leaf decomposition \cite{Bi2016} besides backtracking, and RM is a join-based method.
As shown in the figure, GuP produced the fewest number of recursions for most of the query sets.
This result shows that the high performance of GuP is derived from the reduction of the search space.
Note that, due to overheads related to guards, GuP showed longer average processing time in \cref{fig:search-sec-yeast} contrary to fewer recursions.

\revision{R2}{O5}{}{
We also counted the number of local candidate vertices adaptively pruned by guards during backtracking.
While we omit the detailed results, 11.5\% of local candidate vertices were pruned on average.
This may seem a slight reduction but greatly impacts the number of recursions because it is determined by the multiplication of the number of local candidate vertices.
For example, if we have a 32-vertex query graph, and 11.5\% of local candidate vertices are pruned for every query vertex, the number of recursions decreases to 2\% ($(1 - 0.115)^{32} = 0.02$).
}
}

\subsection{Detailed Analysis of GuP}

Next, we show the results of the experiments to understand the characteristics of GuP.

\revisiontext{
\subsubsection{Reservation Size}
\label{sec:parameter-r}

GuP requires parameter $r$, which specifies the maximum size of reservation guards. 
\cref{fig:resv-size-rec-count} shows the total number of recursions needed to solve 1,000 queries in each query set of Yeast.
Each bar corresponds to a different value of $r$.
``r = $\infty$'' has no limitation on the size.
We disabled the pruning techniques except for reservation guards.
The results show that the pruning power of reservation guards increases as $r$ increases, but it almost saturates at $r = 3$.
We also confirmed almost the same trends with the other data graphs.
From this result, we recommend $r = 3$ as the default setting because $r$ is preferred to be small to reduce the computational costs of the reservation guard generation and a matching test with reservation guards.
Remind that we always used $r = 3$ except for this experiment.
}

\subsubsection{Effectiveness of Each Guard}

\begin{figure}[!t]
  \centering
  \includegraphics[width=1.0\linewidth]{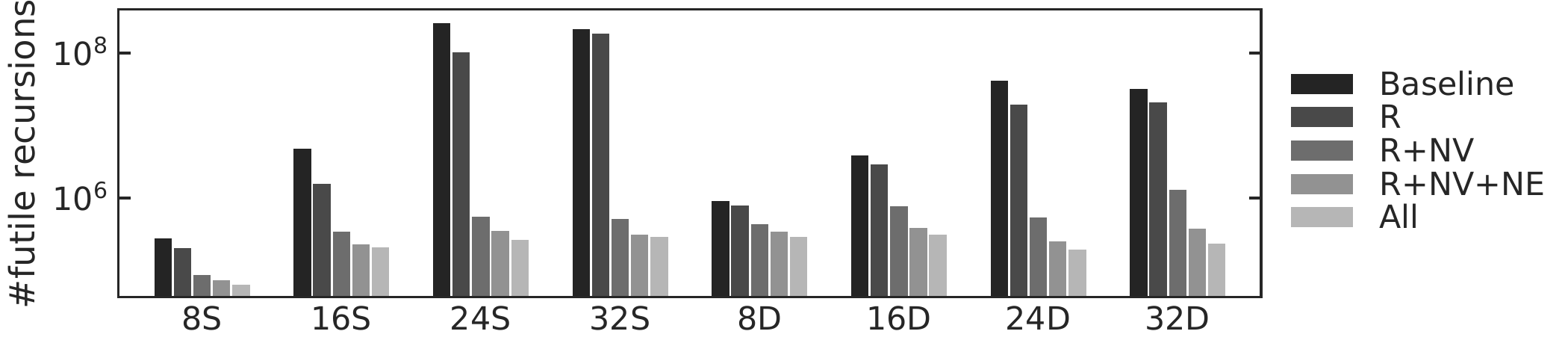}
  \vspace{-4mm}
  \caption{The number of futile recursions on Yeast.}
  \label{fig:recursion-reduction}
\end{figure}

Next, we investigated the effectiveness of each guard.
To better understand the contribution of guards, here we focused on the number of \emph{futile} recursions that is a recursive call leading to a deadend.
\cref{fig:recursion-reduction} shows the number of futile recursions offered by the different combinations of techniques in GuP.
``Baseline'' means a conventional backtracking search, ``R'', ``NV'', and ``NE'' mean the use of reservation guards, nogood guards on vertices, and nogood guards on edges, respectively.
Finally, ``All'' means complete GuP, equivalent to ``R+NV+NE'' with backjumping.
We can see the overall trend that nogood guards on vertices (``NV'') contributed the most to the reduction of futile recursions.
The contribution of nogood guards on edges (``NE'') is the second largest, and backjumping (``All'') offers a little bit more improvement.
Although the contribution of reservation guards (``R'') varied among the query sets, they substantially decreased the number of futile recursions for 16S, 24S, and 24D by 77\%, 60\%, and 53\%, respectively.
Thus, all the techniques in GuP contribute to achieving efficient backtracking, leading to the high robustness of GuP.

\revisiontext{
\subsubsection{Memory Consumption}

\begin{table}[!t]
  \setlength\tabcolsep{1mm}
  \footnotesize
  \centering
  \caption{\revision{R2}{O4}{memory-result}{Peak\revisiontag{R3}{O2}{memory-result} memory consumption}}
  \vspace{-3mm}
  \label{tb:result-memory}
  \begin{tabular}{lrlrrrrrr}
    \bhline{1pt}
                             &       & &                           & \multicolumn{3}{c}{Guard}            &             \\
    \cline{5-7}
    Graph                    & Query & & \multicolumn{1}{c}{Whole} & Reservation & N. vertices & N. edges & Guard/Whole \\
    \hline
    \multirow{4}{*}{Yeast}   &    8S & &                   2.91 MB &     0.07 MB &     0.09 MB &  0.61 MB &     26.49\% \\
                             &   32S & &                   4.21 MB &     0.11 MB &     0.13 MB &  0.83 MB &     25.36\% \\
                             &    8D & &                   3.37 MB &     0.04 MB &     0.05 MB &  0.81 MB &     26.59\% \\
                             &   32D & &                   4.27 MB &     0.07 MB &     0.08 MB &  1.00 MB &     26.93\% \\
    \hline
    \multirow{4}{*}{Patents} &    8S & &                   1.51 GB &     1.43 MB &     1.86 MB &  2.14 MB &      0.36\% \\
                             &   32S & &                   1.51 GB &     2.53 MB &     3.09 MB &  4.31 MB &      0.66\% \\
                             &    8D & &                   1.51 GB &     0.50 MB &     0.67 MB &  0.68 MB &      0.12\% \\
                             &   32D & &                   1.51 GB &     1.39 MB &     1.66 MB &  3.04 MB &      0.40\% \\
    \hline
  \end{tabular} 
\end{table}

\revisiontag{R2}{O4}{memory-text}
Since
\revisiontag{R3}{O2}{memory-text}
GuP needs additional memory space for guards, we evaluated its memory consumption using Yeast and Patents, the largest data graph in our experiment.
\cref{tb:result-memory} shows the result.
The ``Whole'' column shows the peak heap memory consumption\footnote{We used heaptrack to obtain these values: \url{https://github.com/KDE/heaptrack}.}, the columns under ``Guard'' shows the maximum memory consumption of each guard, and ``Guard/Whole'' shows the percentage of the total memory consumption of guards in the whole memory consumption.
While guards occupied about one fourth of the whole memory consumption for Yeast, the percentage decreased to under 1\% for Patents.
This is because the memory consumption for Patents is dominated by the data graph.
The program needs much temporary memory for buffering data read from files and constructing a data structure of the data graph.
In contrast, guards consume little memory because they are attached to candidate vertices and edges, which are much fewer than the vertices and edges of the data graph.
Guards are generated after releasing memory for the temporary data, and hence the peak memory consumption for Patents was 1.51 GB regardless of the size of query graphs.
Note that this seems a reasonable memory consumption because we observed that GQL-G and GQL-R also allocated about 1.5 GB of memory for Patents.
As shown by these results, guard-based pruning is applicable to large-scale graphs.
}

\revisiontext{
\subsubsection{Parallelization}
\label{sec:eval-parallel}

\begin{figure}[t]
  \begin{minipage}[b]{0.48\linewidth}
    \centering
    \includegraphics[width=1.0\linewidth]{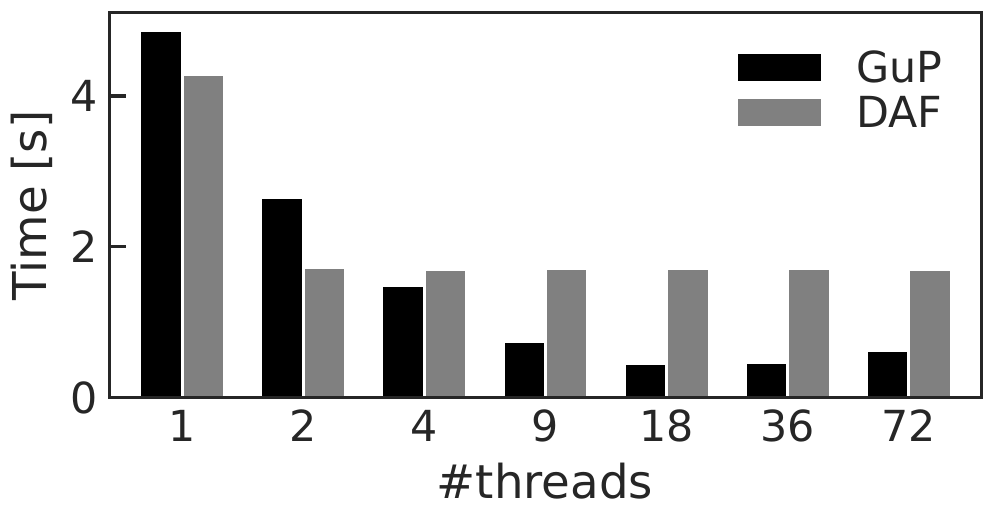}
     \vspace{-4mm}
    \subcaption{Average processing time}
    \label{fig:result-par-avg}
  \end{minipage}
  \hspace{0.01\linewidth}
  \begin{minipage}[b]{0.48\linewidth}
    \centering
    \includegraphics[width=1.0\linewidth]{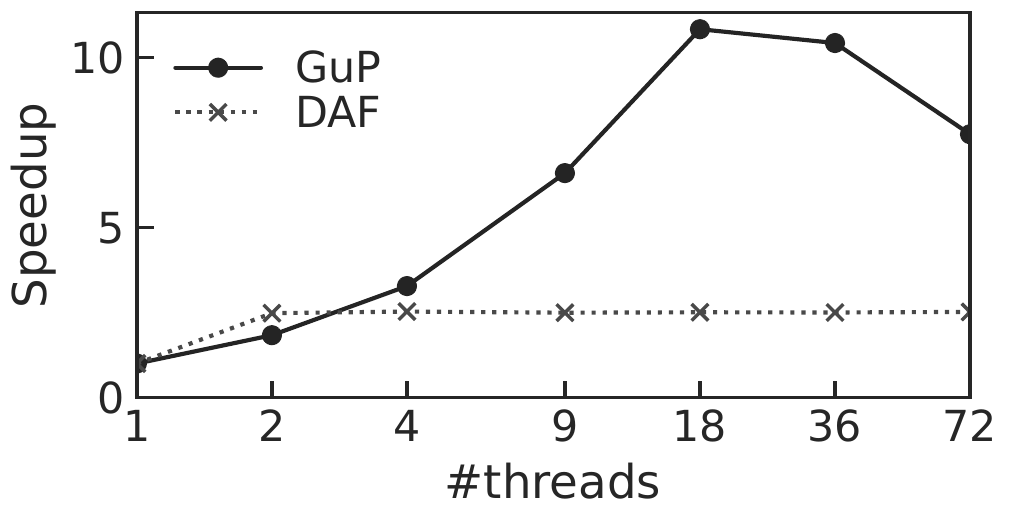}
     \vspace{-4mm}
    \subcaption{Speedup}
    \label{fig:result-par-speedup}
  \end{minipage}
  \label{fig:result-par}
  \vspace{-6mm}
  \caption{\revision{R3}{O3}{eval-result}{Performance in parallel executions.}}
\end{figure}

\revisiontag{R3}{O3}{eval-text}
We compared the performance of GuP and DAF in parallel execution because DAF is the only parallelized method among the methods used in the experiment.
Parallel search often offers superlinear speedups \cite{Han2019} when a thread encounters a search space where it can easily find embeddings more than the limit, which is $10^5$ in our experiment.
This is beneficial in practice but becomes noise in a study of parallel scalability.
To mitigate this effect, we increased the limit to $10^8$ in this experiment.
\cref{fig:result-par} shows average processing time and speedup for 1,000 query graphs in 32D of Yeast with different numbers of threads.
The 1-thread performance differs from \cref{fig:search-sec-yeast} because of the different limit on the number of embeddings.
For 1- and 2-thread execution, GuP performed worse than DAF due to guard overheads and superlinear speedup of DAF.
However, the performance of DAF does not scale to more than two threads.
This is because DAF parallelizes the search only at the candidate vertices of $u_0$ \cite{Han2019} and thus failed in load balancing.
In contrast, thanks to work stealing, GuP offered speedup almost in proportion to the number of threads and outperformed DAF with threads more than two.
Since our machine consists of four NUMA nodes each of which has 18 cores, communication costs degraded the 36- and 72-thread performances.
NUMA optimizations will improve the performance, but it is not a focus of this paper.

In the parallel execution, each thread of GuP individually maintains nogood guards and does not share them with the other threads.
Since this may affect the performance, we counted the total number of recursions in parallel execution.
Perhaps counterintuitively, the parallel execution decreased the number of recursions; the 1- and 72-thread executions produced 38.6 billion and 38.5 billion recursions, respectively.
As mentioned above, a parallel search can find search space that is easy to find many embeddings, which leads to fewer recursions.
Compared to this phenomenon, the thread-local maintenance of nogood guards has only an unobservable impact, and so pruning with guards can be applicable to parallel search.
}

\section{Conclusion}
\label{sec:conclusion}

We proposed GuP, an efficient algorithm for subgraph matching.
GuP utilizes guards on candidate vertices and candidate edges to filter out them adaptively to partial embeddings.
Our contributions are (i) a pruning approach based on guards, (ii) the propagation of the injectivity constraint by a reservation, (iii) the nogood discovery rules for effective pruning, and (iv) search-node encoding of a nogood guard.
The experimental results showed that GuP can solve many queries that the state-of-the-art methods could not solve within a time limit and also can solve other queries in comparable processing time.

\bibliographystyle{ACM-Reference-Format}
\bibliography{extracted}

\end{document}